\documentclass[journal]{IEEEtran}

\usepackage{graphicx}
\usepackage{cite}
\usepackage{bm}
\usepackage{amsmath}
\usepackage{amsfonts}
\usepackage{amssymb}
\usepackage{amsthm}
\usepackage{stmaryrd}
\allowdisplaybreaks

\usepackage{color}
\usepackage{ulem}
\usepackage{stfloats}
\usepackage{float}

\usepackage[font=footnotesize,labelsep=period]{caption}
\usepackage[caption=false,font=footnotesize]{subfig}

\usepackage{multirow}
\usepackage{booktabs}
\usepackage{makecell}

\usepackage{algorithm}
\usepackage{algpseudocode}
\makeatletter

\newtheorem{prop}{Proposition}
\newtheorem{lemma}{Lemma}

\usepackage{hyperref} % 关闭彩色文字，用方框代替
\newtheorem{remark}{Remark}

\begin{document}
	
\title{\LARGE Toward Alias-Free Channel Extrapolation in Upper Mid-Band Systems: \\A Spatial-Frequency-Temporal Tensor Learning Approach
}

\author{Jiawei~Zhuang, 
	Hongwei Hou,
    Yafei Wang,~\IEEEmembership{Graduate Student Member,~IEEE}, Xinping Yi,~\IEEEmembership{Member,~IEEE}, Wenjin~Wang,~\IEEEmembership{Member,~IEEE}, Jiangzhou Wang,~\IEEEmembership{Fellow,~IEEE}, Bj{\"o}rn Ottersten,~\IEEEmembership{Fellow,~IEEE}
    \thanks{\textit{(Jiawei Zhuang and Hongwei Hou contributed equally to this work.) (Corresponding
    author: Wenjin Wang.)}}
	\thanks{
		 Jiawei Zhuang, Hongwei Hou, Yafei Wang, Wenjin Wang and Jiangzhou Wang are with the National Mobile Communications Research
		Laboratory, Southeast University, Nanjing 210096, China, and also
		with Purple Mountain Laboratories, Nanjing 211100, China (e-mail:
        \{jwzhuang, hongweihou, wangyf, wangwj, j.z.wang\}@seu.edu.cn).} 
        	\thanks{
		Xinping Yi is with the National Mobile Communications
Research Laboratory, Southeast University, Nanjing 210096, China (e-mail:
xyi@seu.edu.cn).} 
\thanks{Bj{\"o}rn Ottersten is with the Interdisciplinary Centre for Security, Reliability and Trust (SnT), University of Luxembourg, Luxembourg (e-mail: bjorn.ottersten@uni.lu).}
			
}

% The paper headers
% \markboth{Journal of \LaTeX\ Class Files,~Vol.~14, No.~8, August~2015}%
% {Shell \MakeLowercase{\textit{et al.}}: Bare Demo of IEEEtran.cls for IEEE Journals}

\maketitle
\begin{abstract}
Upper mid-band massive multiple-input multiple-output (MIMO) offers a favorable capacity-coverage trade-off for next-generation wireless systems, but its large antenna arrays, wide bandwidths, and faster temporal variation substantially increase the pilot overhead required for accurate channel state information (CSI) acquisition.
   To reduce this overhead, this paper establishes a tensor-structured multi-domain channel extrapolation framework that exploits the limited-scattering nature of practical propagation environments to recover complete CSI across the spatial-frequency-temporal (SFT) domains from limited observations.
   Specifically, we develop a Tucker-based SFT-domain signal model to represent the complete CSI, where the factor matrices are parameterized by angle-delay-Doppler (ADD)-domain grids.
   Thanks to this representation, we reveal that limited SFT-domain observations imposed by uniform pilot patterns and antenna-port selection inherently induce ADD-domain aliasing, so that multiple physically distinct ADD-domain components become indistinguishable within structured ADD aliasing groups.
  To tackle this issue, we introduce a support-prior-assisted ADD-domain de-aliasing mechanism that leverages coarse-grained support information. Since exact closed-form characterization of this mechanism is difficult to derive, we propose a tensor-structure-aware axial-attention neural network (TANN), which integrates axis-wise attention with a lightweight multi-scale CNN-based gating module to incorporate support priors for ADD-domain de-aliasing.
With tensor-structure modeling and mixed-configuration training over different
pilot decimation factors, TANN yields
a unified model that generalizes across pilot configurations without retraining.
Extensive numerical results demonstrate that the proposed framework consistently outperforms representative baselines across diverse pilot configurations, user velocities, carrier frequencies, and propagation scenarios.
   \end{abstract}

% Note that keywords are not normally used for peerreview papers.
\begin{IEEEkeywords}
Multi-domain channel extrapolation, tensor
representation, deep learning, massive MIMO.
\end{IEEEkeywords}

% For peer review papers, you can put extra information on the cover
% page as needed:
% \ifCLASSOPTIONpeerreview
% \begin{center} \bfseries EDICS Category: 3-BBND \end{center}
% \fi
%
% For peerreview papers, this IEEEtran command inserts a page break and
% creates the second title. It will be ignored for other modes.
\IEEEpeerreviewmaketitle

\section{Introduction}
\label{sec_intro}

\IEEEPARstart{D}{riven} by the growing demand for data-intensive 
applications such as broadband Internet of Things (IoT)~\cite{BroadbandIoTZhou2023} and extended 
reality (XR)~\cite{XRLoopbackMechanismBojovic2023}, next-generation wireless networks are envisioned 
to deliver high capacity and broad coverage simultaneously. 
However, the currently deployed frequency ranges (FRs), namely FR1
(sub-6~GHz) and FR2 (millimeter-wave), remain unable to simultaneously
achieve high throughput under limited
bandwidth availability and provide wide-area coverage ~\cite{FR1FR2ChannelMeasurementNaqvi2021}.
In this context, growing attention has been directed toward the newly
defined  FR3 (upper mid-band, 7--24~GHz), which offers a favorable
trade-off between  spectrum exhaustion in FR1 and 
severe attenuation in FR2~\cite{XLMIMOCPHou}.

To realize the envisioned capacity and coverage of FR3, accurate channel state information (CSI) acquisition is essential for enabling effective wireless transmission designs, including user scheduling, precoding, and signal detection~\cite{11049893}.
CSI is conventionally acquired via pilot-based channel estimation. 
However, the large antenna arrays, wide bandwidths, and elevated carrier
frequencies envisioned for FR3 systems collectively impose prohibitive pilot
overhead across the spatial, frequency, and temporal domains, thereby posing
fundamental challenges to accurate CSI acquisition:
\begin{itemize}
    \item \textit{Spatial domain:} Hybrid beamforming (HBF) is widely adopted in wideband massive multiple-input multiple-output (MIMO) systems to reduce hardware cost and power consumption by limiting the number of radio frequency (RF) chains at the base station (BS)~\cite{HBFSurveyAhmed2018}. Nevertheless, acquiring full-dimensional CSI under HBF requires multi-slot pilot training with different analog beamforming matrices, incurring substantial pilot overhead and considerable time and power consumption~\cite{HBFSwithchChannelEstimationMao2018}.
    \item \textit{Frequency domain:} The wide bandwidth of FR3 leads to larger subcarrier spacing in orthogonal frequency-division multiplexing (OFDM) systems\cite{OFDM1}\cite{OFDM2}, which induces rapid channel variations across subcarriers and exacerbates frequency-selective fading. To ensure reliable channel estimation, frequency-domain pilots must be placed more densely, thereby increasing pilot overhead.
    \item \textit{Temporal domain:} 
    The elevated carrier frequency of FR3 results in larger Doppler shifts and faster channel variations than FR1, thereby exacerbating channel aging under mobility. Consequently, a shorter pilot transmission interval is required to track the channel variation, substantially increasing temporal-domain pilot overhead~\cite{FR3ChannelMeasurementMiao2023}.
\end{itemize}
Therefore, multi-domain channel extrapolation has emerged as a promising solution to reduce pilot overhead by reconstructing and predicting CSI across antennas, subcarriers, and OFDM symbols from limited observations.
\vspace{-2.5mm}
\subsection{Prior Work}

Due to the limited-scattering nature of practical propagation environments \cite{3GPP_38901}, wireless channels are often governed by a small number of dominant propagation paths, giving rise to structured correlations across the spatial, frequency, and temporal domains. Accordingly, existing channel extrapolation methods exploit such correlations either within individual domains or jointly across multiple domains, and can be broadly categorized into spatial-, frequency-, and temporal-domain approaches, as well as multi-domain combinations.
For the spatial domain, prior works demonstrated that full-dimensional CSI can be reconstructed from observations at only a subset of antenna ports \cite{SpatialDomainChannelExtrapolationZhang2022}.
Since the extrapolation performance strongly depends on the antenna-port selection pattern at the BS, recent studies have developed learning-based frameworks that jointly optimize antenna selection and channel extrapolation~\cite{yang2020deep}\cite{Lin2021AntennaSelection}, with the effectiveness of this joint design paradigm further validated in massive MIMO systems on a 3rd generation partnership project (3GPP)-compliant over-the-air prototype~\cite{GaoFeiFei2025AntennaSelection}.
Beyond optimizing the antenna observation pattern, recent studies have exploited auxiliary information to improve spatial-domain extrapolation accuracy. Specifically, environment-related multipath information extracted from CSI was incorporated into the extrapolation network through cross-attention in \cite{Gao2026PDP}. In a complementary direction, uplink and downlink reference-signal observations were fused for full CSI reconstruction in the spatial domain~\cite{He2025JUDCEN}.

In the frequency domain, channel extrapolation aims to recover full-band CSI from limited pilot subcarriers to reduce pilot overhead.
A straightforward approach is to perform least-squares (LS) estimation on pilot subcarriers and then interpolate the resulting CSI to data subcarriers~\cite{OFDMLinearInterpolationLee2008}. To further improve estimation accuracy, linear minimum 
mean squared error (LMMSE) estimators were developed in \cite{MMSEOFDMChannelEstimationEdfors1998}, incorporating channel covariance information to better exploit frequency-domain channel correlations.
Nevertheless, such model-driven methods are limited by fixed interpolation rules or statistical assumptions, and their performance may degrade under low-density pilot configurations or mismatched channel statistics. 
This has motivated learning-based approaches to capture richer channel structures from data. For example, a dual-attention-based channel estimation network was proposed in~\cite{GaofeifeiDACEN2023} to exploit spatial-delay features for frequency-domain channel extrapolation under low-density pilot configurations.

%信道预测
In the temporal domain, channel extrapolation predicts future CSI based on historical observations, thereby avoiding frequent transmission of pilots.
Model-based methods, including the sum-of-sinusoids model \cite{SosChannelPredictionWong2006}, autoregressive models \cite{WuchiARChannelPrediction,Yuan2020MachineLearningBasedChannelPredictionAR} and parametric channel models \cite{ParameterBasedModel2018prediction}, have shown promising performance in channel prediction. The advancement of deep learning has further enhanced channel prediction performance by enabling data-driven modeling of complex temporal dynamics.
Prior data-driven approaches primarily relied on recurrent neural networks (RNNs) to capture temporal correlations in CSI sequences, achieving improved prediction accuracy over classical approaches~\cite{LSTMGRUChannelPredictionHelmy2023}.
To alleviate the error accumulation caused by serial processing, Transformer-based channel prediction was introduced in~\cite{TrasnformerChannelPredictionJiang2022}, leveraging self-attention to capture long-term temporal dependencies.

Building upon the progress of single-domain channel extrapolation, recent studies have explored multi-domain channel extrapolation to 
improve CSI acquisition efficiency by  exploiting coupled structures across multiple domains.
As an early effort, a two-stage two-dimensional (2D)  scheme was proposed in \cite{2DChannelExtrapolationWan2024} to exploit the frequency-temporal channel structure, thereby improving extrapolation robustness against practical impairments. Beyond the frequency-temporal domain, a unified multi-domain channel extrapolation framework was developed in \cite{MultiDomainChannelExtrapolationHan2021}, where the time, frequency, polarization, spatial, and user domains were  considered with sparse pilot pattern design for efficient channel extrapolation. Since the cross-domain coupling inherent in practical wireless channels arises from complex scattering environments and is difficult to characterize analytically, learning-based approaches have been introduced for multi-domain channel extrapolation.
In~\cite{3DDomainExtrapolationGaoFeiFei}, a unified three-dimensional (3D) framework based on deep learning was developed for spatial-frequency-temporal (SFT)-domain channel extrapolation, enabling full-dimensional uplink CSI recovery from sparse spatial-frequency pilots and slot-level downlink CSI prediction by exploiting uplink-downlink reciprocity and temporal correlations.
To reduce the computational burden of joint cross-domain modeling, a Transformer encoder-like architecture was developed in~\cite{Enabling6GThroughMultiDomainChannelExtrapolation} for joint time-frequency-space extrapolation, using a multilayer perceptron (MLP) in place of multi-head attention.

\vspace{-4mm}
\subsection{Motivation and Main Contributions}
The aforementioned single-domain channel extrapolation approaches cannot jointly exploit the coupled correlations across the SFT domains, which limits their capability to reduce pilot overhead and enhance extrapolation accuracy. Although several preliminary attempts mitigate this issue by leveraging multi-domain correlations, they mainly operate in the SFT domains, without explicitly exploiting the sparse structures in angle-delay domains enabled by large-scale antenna arrays and wide bandwidths, nor the Doppler-domain structure that better characterizes temporal channel variations. Moreover, model-driven methods often depend on restrictive assumptions and are vulnerable to environmental mismatch, whereas data-driven methods are usually tailored to specific pilot configurations and may require retraining when the pilot pattern changes. 
To this end, we investigate multi-domain channel extrapolation for upper mid-band massive MIMO-OFDM systems, leveraging an angle-delay-Doppler (ADD) representation to capture limited-scattering channel structures and adopting a unified model to
generalize across diverse pilot configurations. The main contributions of this work are summarized as follows:
\begin{itemize}
    \item
    By exploiting the limited-scattering propagation characteristics of wireless channels, we develop a tensor-structured SFT-domain channel model along with its ADD-domain representation, where the factor matrices are parameterized by the discrete grids of the corresponding domains. Building upon this model, we analytically reveal the limitation of uniform pilot decimation in the SFT domains that induces asymmetric aliasing across ADD domains, in such a way that the aliasing ADD-domain components are rendered indistinguishable from decimated observations. To resolve this aliasing ambiguity, we reformulate multi-domain channel extrapolation as a support-prior-assisted ADD-domain de-aliasing problem, where support priors obtainable from historical observations, environmental context, or auxiliary reference signals are incorporated to guide ambiguity resolution.

    \item 
    To tackle the support-prior-assisted ADD-domain de-aliasing problem, we propose a tensor-structure-aware axial-attention neural network (TANN), which integrates an axial-attention backbone with a support-prior processing gating module. The axial-attention backbone exploits the tensor-structured ADD-domain representation to decompose full self-attention over the joint ADD-domain into sequential axis-wise self-attention along the angle, delay, and Doppler domains, alleviating the computational bottleneck of Transformer-based modeling under large antenna arrays and wide bandwidths. 
    To effectively incorporate coarse domain-wise support priors,
    we design a lightweight multi-scale convolutional neural network (CNN) gating module that extracts local support features and neighborhood contiguity, and adaptively modulates axial-attention features to guide ADD-domain de-aliasing. Moreover, a learnable per-axis residual gate is incorporated to adaptively regulate prior-guided feature updates, enhancing robustness to imperfect priors.

    \item 
    To incorporate the ADD-domain sparsity prior into the training process, we  design a composite loss that combines the SFT-domain normalized mean squared error (NMSE) with auxiliary axis-wise ADD-domain spectral consistency terms.
The training samples are generated under 3GPP-defined outdoor propagation scenarios, and the corresponding support priors are provided as auxiliary inputs for learning a prior-conditioned de-aliasing operator. 
During deployment, the support-prior input can be derived from the specific propagation conditions observed at the target site, allowing the model to adapt to site-dependent propagation characteristics without site-wise retraining.
To enable pilot-configuration-generalizable extrapolation, we develop a mixed-configuration training strategy that jointly trains the model over randomized pilot decimation factors. By mapping different pilot configurations into a common ADD-domain input space through flexible transform sampling densities, the proposed framework learns a unified model that generalizes across diverse pilot configurations  without retraining.

\end{itemize}

\vspace{-1mm}
This paper is structured as follows: Section~\ref{sec_system} presents the system and signal models. Section~\ref{sec_problem} formulates the problem through multi-domain aliasing analysis and a support-prior-assisted de-aliasing framework. Section~\ref{sec_network} details the proposed network architecture, together with the loss design and training strategy.
Numerical results are given in 
Section~\ref{sec_simulation}. Finally, Section~\ref{sec_Conclusion} concludes this paper. 

{\it{Notations}}: Scalars, vectors, matrices, and tensors are denoted by $x$, $\mathbf{x}$, $\mathbf{X}$, and $\boldsymbol{\mathcal{X}}$, respectively.  The superscripts $(\cdot)^*$, $(\cdot)^T$, $(\cdot)^H$, and $(\cdot)^{\dagger}$ denote conjugate, transpose, conjugate transpose, and pseudo-inverse, respectively. The norms $\|\cdot\|_{\mathrm{F}}$ and $\|\cdot\|_2$ denote the Frobenius and Euclidean norms, respectively. 
$\odot$ and $\circ$ denote the Hadamard and outer products, respectively, and $\times_d$ denotes the mode-$d$ tensor-matrix product. 
$\mathcal{CN}(\mu,\sigma^2)$ denotes the complex Gaussian distribution with mean $\mu$ and variance $\sigma^2$, and $\mathbb{E}\{\cdot\}$ denotes expectation.

\section{System Model}
\label{sec_system}
\begin{figure}[!t]
	\centering
	%	\hspace*{0.05\linewidth}
	\subfloat[Spatial-domain channel extrapolation under the HBF architecture.]{\includegraphics[width = 0.935\linewidth]{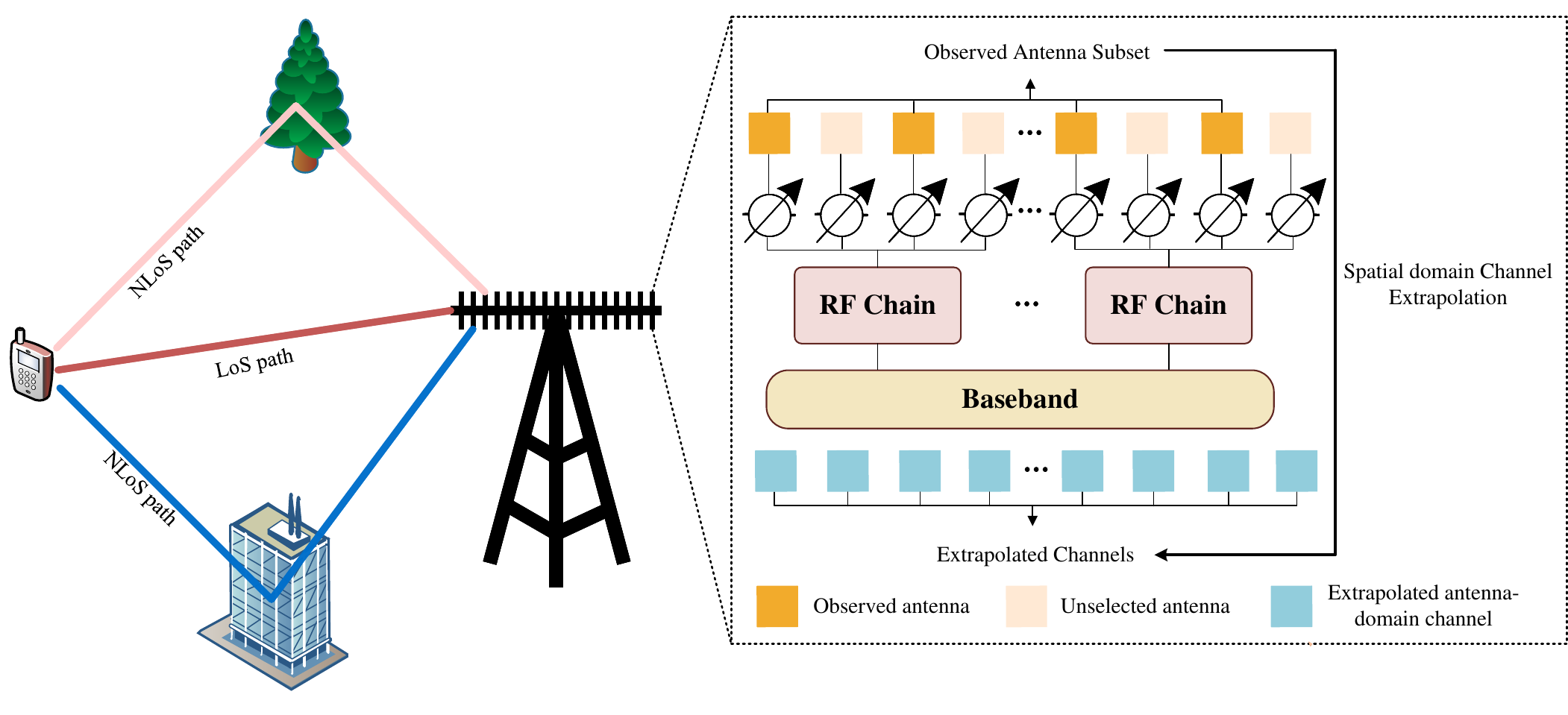}%
	\label{fig:AntennaDomainChannelExtrapolation}}
    
	\subfloat[Frequency- and temporal-domain channel extrapolation.]{\includegraphics[width = 1\linewidth]{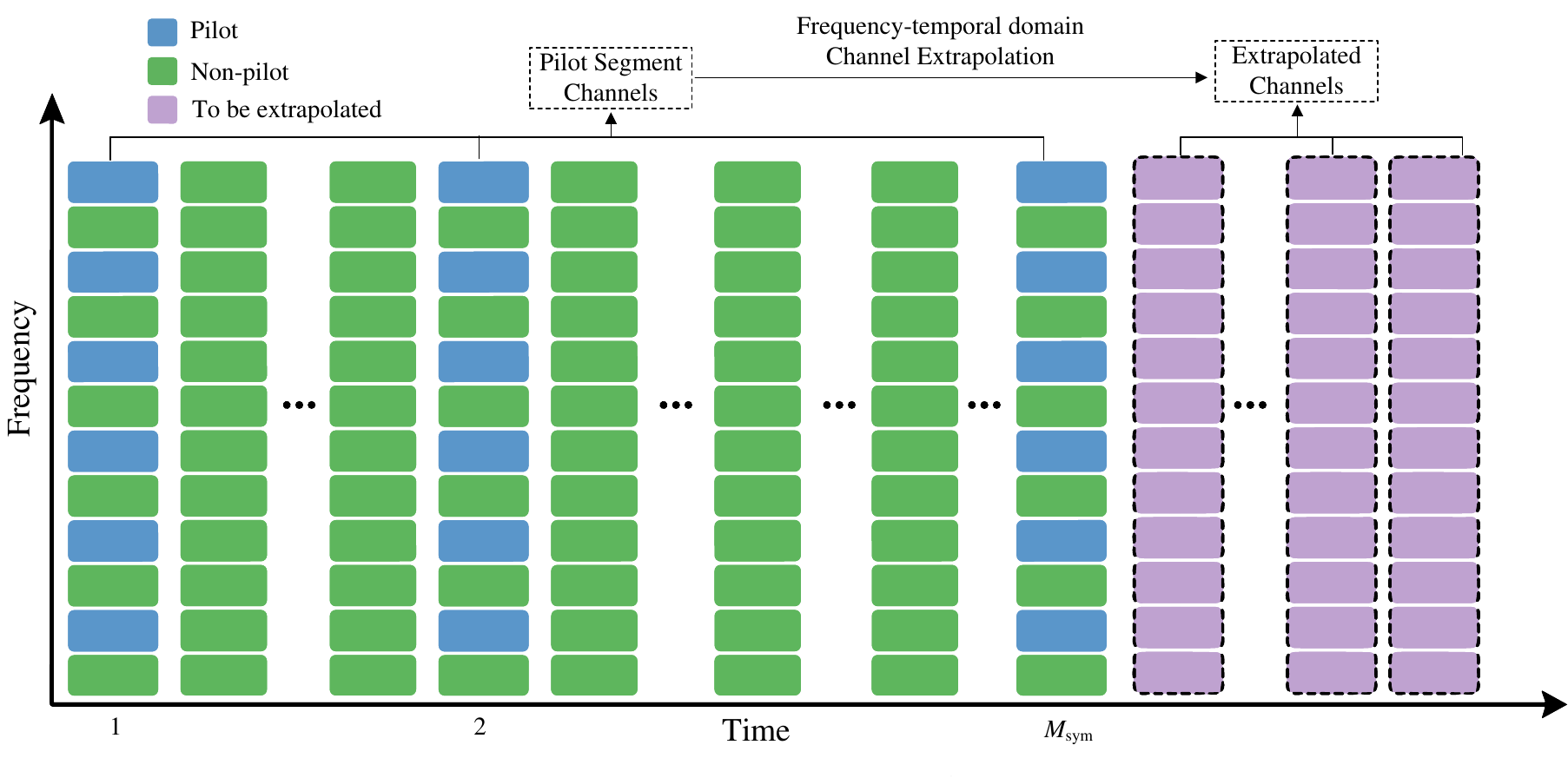}%
	\label{fig:FrequencyTemporalDomainChannelExtrapolationZJW}}
	\caption{Illustration of multi-domain channel extrapolation in upper mid-band massive MIMO-OFDM systems.}
    \vspace{-6mm}
	\label{fig:MultidomainChannelExtrapolationIllustrations}
\end{figure}

Consider a massive MIMO-OFDM system operating in time-division duplexing (TDD) mode, where the BS, equipped with a uniform linear array (ULA) comprising \(N_{\mathrm{an}}\) antennas and \(N_{\mathrm{RF}}\) RF chains with \(N_{\mathrm{RF}} < N_{\mathrm{an}}\), serves a typical single-antenna mobile terminal (MT).
% \footnote{The single-user model is adopted for clarity; multi-user extensions follow by standard orthogonal pilot allocation and per-user processing.}. 
As illustrated in Fig.~\ref{fig:AntennaDomainChannelExtrapolation}, 
the BS adopts an HBF architecture, where $N_{\mathrm{RF}}$ RF chains are connected to $N_{\mathrm{an}}$ antennas through an analog phase shift network and digital baseband processing.
OFDM modulation is employed with $N_{\mathrm{FFT}}$ subcarriers, among which $N_{\mathrm{sc}}$ subcarriers are used for data transmission with a subcarrier spacing of $\Delta f$ and $N_{\mathrm{CP}}$-length cyclic prefix (CP). Hence, the system sampling interval, the OFDM symbol duration, and the CP duration are given by $\Delta T_{\mathrm{sam}}=\frac{1}{N_{\mathrm {FFT}} \Delta f}$, $\Delta T_{\mathrm{sym}} = N_{\mathrm {FFT}} \Delta T_{\mathrm{sam}}$, and $\Delta T_{\mathrm{CP}} = N_{\mathrm{CP}}\Delta T_{\mathrm{sam}}$, respectively, so that the total OFDM symbol duration is $\Delta T=\Delta T_{\mathrm {sym}}+\Delta T_{\mathrm {CP}}$.

%加一段写多域观测抽取这个事，
As antenna arrays and bandwidths continue to scale, exhaustive pilot-based CSI acquisition becomes increasingly costly in next-generation wireless networks. To reduce the resulting pilot overhead,  we adopt sparse channel sounding over the SFT domains and recover or predict the unobserved CSI through multi-domain channel extrapolation. In the spatial domain, the $N_{\mathrm{RF}}$ RF chains are connected to $N_{\mathrm{RF}}$ uniformly selected antenna ports, such that  the number of sounded antennas matches the available RF chains. The uniform antenna selection avoids configuration-dependent antenna-selection signaling and facilitates model generalization. In the frequency domain, the MT employs a comb-type pilot pattern with a uniform spacing of $N_{\mathrm{f}}$ subcarriers, corresponding to a pilot spacing of $\Delta \bar{f}=N_{\mathrm{f}}\Delta f$. In the temporal domain, pilot observations are collected over $N_{\mathrm{t}}$ OFDM symbols, forming a temporal observation window of duration $\Delta \bar{T}=N_{\mathrm{t}}\Delta T$. Under such SFT-domain pilot configuration, as illustrated in Fig.~\ref{fig:MultidomainChannelExtrapolationIllustrations}, multi-domain channel extrapolation is employed to reconstruct the CSI over all $N_{\mathrm{an}}$ antennas and $N_{\mathrm{sc}}$ subcarriers, and to predict the CSI of OFDM symbols beyond the observation window.

\subsection{Channel Model}
The channel impulse response (CIR) at the $n_{\mathrm{an}}$-th antenna is expressed as \cite{hou2024jointCIR}
\begin{equation}\label{eq:CIR_ULA}
	h_{n_{\mathrm{an}}}(t,\tau)
	=\sum_{l=1}^{L} g_{l} e^{j2\pi \nu_{l} t}
	\delta\!\left(\tau-\tau_{n_{\mathrm{an}},l}\right),
\end{equation}
where $L$ is the number of multipath components, and $\tau_{n_{\mathrm{an}},l}=\tau_l+\frac{(n_{\mathrm{an}}-1)d\cos(\vartheta_l)}{c}$ is the propagation delay of the $l$-th path from the MT to the $n_{\mathrm{an}}$-th BS antenna under the far-field planar-wave assumption.
Here, $\vartheta_l$ denotes the angle of arrival (AoA) of the $l$-th path, while $g_l$, $\tau_l$, and $\nu_l$ denote its complex gain, delay, and Doppler frequency, respectively. In addition, $c$ is the speed of light, and $d$ is the inter-antenna spacing.

By taking the Fourier transform of the CIR and stacking the channel responses across antennas, subcarriers, and OFDM symbols, the full SFT-domain channel tensor is obtained as
\begin{equation}\label{eq:SFT_sum_rank1}
	\boldsymbol{\mathcal{H}}=\sum_{l=1}^{L} g_{l}
	\mathbf{a}(\psi_{l})\circ 
	\mathbf{b}(\tau_{l})\circ \mathbf{c}(\nu_{l}),
\end{equation}
where $\psi_l \triangleq d\cos(\vartheta_l)/\lambda$ is the directional cosine of the $l$-th propagation path, and $\lambda$ is the carrier wavelength. The vectors $\mathbf{a}(\psi)$, $\mathbf{b}(\tau)$, and $\mathbf{c}(\nu)$ denote the steering vectors in the angle, delay, and Doppler domains, respectively, with
$[\mathbf{a}(\psi)]_{n_{\mathrm{an}}} \triangleq e^{-j2\pi(n_{\mathrm{an}}-1)\psi}$,
$[\mathbf{b}(\tau)]_{n_{\mathrm{sc}}} \triangleq e^{-j2\pi(n_{\mathrm{sc}}-1)\Delta f\tau}$, and
$[\mathbf{c}(\nu)]_{n_{\mathrm{sym}}} \triangleq e^{j2\pi(n_{\mathrm{sym}}-1)\Delta T\nu}$. The parametric decomposition in \eqref{eq:SFT_sum_rank1} is not specific to FR3 and remains applicable to FR1 and FR2, with the principal band-dependent differences reflected in the statistical distributions of the path gains, delays, angles, and Doppler frequencies. Available measurement results show that FR3 generally exhibits lower propagation loss and richer resolvable multipath than FR2, while tending to exhibit smaller delay and angular spreads and, for a given user velocity, larger Doppler shifts than FR1~\cite{MiaoFR3Measurement,LiuFR3Measurement}.

The representation in \eqref{eq:SFT_sum_rank1} reveals that the
SFT-domain channel admits a rank-$L$ canonical polyadic structure, owing to the limited
scattering characteristics of practical propagation environments. When
this low-rank structure is exploited through an unstructured canonical polyadic
decomposition, the practical lack of uniqueness guarantee calls for
additional structural priors on the factor matrices.  To this end, we construct the factor matrices
$\mathbf{A}(\overline{\boldsymbol{\psi}})$,
$\mathbf{B}(\overline{\boldsymbol{\tau}})$, and
$\mathbf{C}(\overline{\boldsymbol{\nu}})$ of the
SFT-domain channel based on predefined ADD-domain grids, defined as
\begin{equation}
	\begin{aligned}
		\mathbf{A}(\overline{\boldsymbol{\psi}}) & =\left[\mathbf{a}\left(\bar{\psi}_1\right), \ldots, \mathbf{a}\left(\bar{\psi}_{K_{\mathrm{ang}}}\right)\right]\in
        \mathbb{C}^{N_{\mathrm{an}}\times K_{\mathrm{ang}}}, \\
		\mathbf{B}(\overline{\boldsymbol{\tau}}) & =\left[\mathbf{b}\left(\bar{\tau}_1\right), \ldots, \mathbf{b}\left(\bar{\tau}_{K_{\mathrm{de}}}\right)\right]\in
        \mathbb{C}^{N_{\mathrm{sc}}\times K_{\mathrm{de}}}, \\
		\mathbf{C}(\overline{\boldsymbol{\nu}}) & =\left[\mathbf{c}\left(\bar{\nu}_1\right), \ldots, \mathbf{c}\left(\bar{\nu}_{K_{\mathrm{do}}}\right)\right]\in
        \mathbb{C}^{N_{\mathrm{sym}}\times K_{\mathrm{do}}},
	\end{aligned}
\end{equation}
where $K_{x}$, $x \in\{\mathrm{ang}, \mathrm{de}, \mathrm{do}\}$, denotes the number of grid points in the corresponding domain, and $\overline{\boldsymbol{\psi}}$, $ \overline{\boldsymbol{\tau}}$, and $\overline{\boldsymbol{\nu}}$ denote the angle, delay, and Doppler grids, respectively. Following common practice in dynamic grid modeling, uniform sampling is adopted in the angle, delay, and Doppler domains, with grid sizes proportional to the numbers of antennas, subcarriers, and OFDM symbols so as to balance computational complexity and model flexibility.

Accordingly, the SFT-domain channel can be represented by the following Tucker-decomposition-based model~\cite{HouTuckerBasedChannelPrediction}:
\begin{equation}\label{eq:tucker}
\boldsymbol{\mathcal{H}}=\boldsymbol{\mathcal{G}} \times_{1}\mathbf{A}  \times_{2} \mathbf{B} \times_{3} \mathbf{C},
\end{equation}
where $\boldsymbol{\mathcal{G}} \in \mathbb{C}^{K_{\mathrm{ang}} \times K_{\mathrm{de}} \times K_{\mathrm{do}}}$ denotes the ADD-domain channel tensor, whose entries represent the gains associated with the corresponding ADD-domain grid points and each of which contributes a rank-one component in the SFT domains.
    \vspace{-2.5mm}
\subsection{Signal Model}
During channel sounding, sparse measurements are collected over the SFT domains to reduce pilot overhead and CSI acquisition burden. Specifically, the subsampled SFT-domain channel tensor is obtained as
\begin{equation}
\begin{aligned}
	\boldsymbol{\mathcal{H}}_{\mathrm{p}}
	&= \boldsymbol{\mathcal{H}} \times_1 \mathbf{P}_{\mathrm{s}}
	\times_2 \mathbf{P}_{\mathrm{f}}
	\times_3 \mathbf{P}_{\mathrm{t}} \\
	&= \boldsymbol{\mathcal{G}} \times_1 \mathbf{P}_{\mathrm{s}} \mathbf{A}
	\times_2 \mathbf{P}_{\mathrm{f}} \mathbf{B}
	\times_3 \mathbf{P}_{\mathrm{t}} \mathbf{C},
\end{aligned}
\end{equation}
where $\mathbf{P}_{\mathrm{s}} \in \{0,1\}^{M_{\mathrm{an}} \times N_{\mathrm{an}}}$, 
$\mathbf{P}_{\mathrm{f}} \in \{0,1\}^{M_{\mathrm{sc}} \times N_{\mathrm{sc}}}$, and 
$\mathbf{P}_{\mathrm{t}} \in \{0,1\}^{M_{\mathrm{sym}} \times N_{\mathrm{sym}}}$ denote the binary selection matrices in the spatial, frequency, and temporal domains, respectively, $M_{\mathrm{an}}$, $M_{\mathrm{sc}}$, and $M_{\mathrm{sym}}$ denote the numbers of observed antenna ports, subcarriers, and  OFDM symbols, respectively, with $M_{\mathrm{an}}=N_{\mathrm{RF}}=N_{\mathrm{an}}/N_{\mathrm{s}}$, $M_{\mathrm{sc}}=N_{\mathrm{sc}}/N_{\mathrm{f}}$, and $M_{\mathrm{sym}}=N_{\mathrm{sym}}/N_{\mathrm{t}}$. Following the uniform antenna selection, comb-type frequency pilots, and uniform temporal observation pattern described above, the selection matrices are constructed as
\begin{equation}
[\mathbf{P}_{\xi}]_{m,n}=
\begin{cases}
1, & n=1+(m-1)N_\xi,\\
0, & \text{otherwise},
\end{cases}
\quad \xi \in \{\mathrm{s},\mathrm{f},\mathrm{t}\}.
\end{equation}

Unlike the spatial and frequency domains, where extrapolation reconstructs the unobserved antenna and subcarrier entries within the same SFT block, the temporal domain is used for forward prediction. Specifically, the observations within the temporal window $\mathcal{T}_{\mathrm{obs}}$ are exploited to predict the CSI of future OFDM symbols in $\mathcal{T}_{\mathrm{pred}}$, while the CSI of symbols between adjacent pilot-bearing symbols is not reconstructed.

After cyclic-prefix removal and OFDM demodulation, the received pilot observations can be expressed as
\begin{equation}
\boldsymbol{\mathcal{Y}}
=
\boldsymbol{\mathcal{S}}_{\mathrm{p}} \odot \boldsymbol{\mathcal{H}}_{\mathrm{p}}
+
\boldsymbol{\mathcal{Z}},
\end{equation}
where $\boldsymbol{\mathcal{S}}_{\mathrm{p}}$, $\boldsymbol{\mathcal{H}}_{\mathrm{p}}$, and $\boldsymbol{\mathcal{Z}}$ denote the pilot, the observed SFT-domain channel, and the additive white Gaussian noise (AWGN), respectively. Since the pilot symbols are known at both BS and MT, $\boldsymbol{\mathcal{S}}_{\mathrm{p}}$ is assumed to be an all-ones tensor without loss of generality.

By substituting the Tucker-based channel model into the observation equation, the received signal can be expressed in terms of the ADD-domain channel as
\begin{equation}\label{eq:obs_tucker}
\boldsymbol{\mathcal{Y}}
=
\boldsymbol{\mathcal{G}}
\times_1 \mathbf{A}_{\mathrm{o}}
\times_2 \mathbf{B}_{\mathrm{o}}
\times_3 \mathbf{C}_{\mathrm{o}}
+
\boldsymbol{\mathcal{Z}},
\end{equation}
where $\mathbf{A}_{\mathrm{o}} \triangleq \mathbf{P}_{\mathrm{s}}\mathbf{A}$, $\mathbf{B}_{\mathrm{o}} \triangleq \mathbf{P}_{\mathrm{f}}\mathbf{B}$, and $\mathbf{C}_{\mathrm{o}} \triangleq \mathbf{P}_{\mathrm{t}}\mathbf{C}$ denote the subsampled factor matrices associated with the spatial, frequency, and temporal domains, respectively.

Based on \eqref{eq:obs_tucker}, multi-domain channel extrapolation is formulated as the recovery of the ADD-domain channel from the noisy and subsampled SFT-domain pilot observations $\boldsymbol{\mathcal{Y}}$, which contains only partial channel information after the action of the multi-domain selection matrices. Once the ADD-domain channel and its angle, delay, and Doppler structures are estimated, the full SFT-domain channel can be reconstructed via the multilinear transformation in \eqref{eq:tucker}. In this framework, the angle and delay domains support reconstruction over the spatial and frequency domains, respectively, while the Doppler domain enables prediction over the temporal domain, forming a unified framework for multi-domain channel extrapolation.

\section{Problem Formulation}
\label{sec_problem}
In Tucker-based representations, the ADD-domain channel acquisition  depends on its sparse structure. However, 
practical channel sounding provides only sparse SFT-domain observations, whose selection operations alter the effective ADD-domain representation through the subsampled factor matrices. Under uniform multi-domain sampling, this effect can fold multiple physically distinct ADD-domain components into the same observed coefficients, thereby obscuring the  sparse support and introducing structural ambiguity for ADD-domain recovery.  In this section, we formalize the folding effect induced by uniform multi-domain sampling, characterize the resulting asymmetric ambiguity
across the angle, delay, and Doppler domains, and formulate the recovery problem as a support-prior-assisted ADD-domain de-aliasing problem.
    \vspace{-2.5mm}
\subsection{ADD-Domain Aliasing under SFT Decimation}
The effect of SFT-domain pilot decimation on the ADD-domain  can be analyzed through the decimation of  underlying steering vectors. Specifically, decimation along the spatial, frequency, and temporal dimensions is equivalent to subsampling the antenna-, subcarrier-, and OFDM-symbol-indexed steering vectors, respectively.
We therefore first recall the classical one-dimensional decimation property and then extend it to the tensor modes of the Tucker-based ADD representation.
\begin{lemma}\label{lemma:decimation}
Let $x[n]$, $n\in\mathbb{Z}$, be a discrete sequence with discrete-time Fourier transform (DTFT) $X(e^{j\omega})$, where $\omega\in[-\pi,\pi)$. For a positive integer $R$, define the $R$-fold uniform decimation by $y[m]=x[mR]$, $m\in\mathbb{Z}$. The DTFT of $y[m]$ is
\begin{equation}\label{eq:decimation_lemma}
Y(e^{j\omega})=\frac{1}{R}\sum_{r=0}^{R-1}X\!\left(e^{j\frac{\omega-2\pi r}{R}}\right).
\end{equation}
Consequently, decimation compresses the spectrum and generates $R$ equal-weight replicas. Aliasing occurs when the original bandwidth exceeds $\pi/R$.
\end{lemma}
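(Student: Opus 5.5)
The plan is to prove Lemma~\ref{lemma:decimation} with the standard comb-function (polyphase) argument, and then read off the aliasing condition. I will first assume $x\in\ell^1(\mathbb{Z})$, so that all DTFTs converge absolutely and sums may be interchanged freely. I will afterwards remark that the same identity holds in the $\ell^2$ sense by density.

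\textbf{Step 1 (comb sequence).} Introduce the intermediate sequence $v[n]=x[n]\,p_R[n]$, where $p_R[n]=1$ if $R\mid n$ and $0$ otherwise. Using the finite geometric-sum identity
\begin{equation*}
p_R[n]=\frac{1}{R}\sum_{r=0}^{R-1}e^{-j2\pi rn/R},
\end{equation*}
one gets
\begin{equation*}
V(e^{j\omega})=\frac{1}{R}\sum_{r=0}^{R-1}X\!\left(e^{j(\omega+2\pi r/R)}\right).
\end{equation*}
The modulation property of the DTFT justifies this, with the interchange of the finite sum and the series being trivial.

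\textbf{Step 2 (compression).} Since $v$ vanishes off multiples of $R$ and $y[m]=v[mR]$, we have
\begin{equation*}
Y(e^{j\omega})=\sum_m v[mR]e^{-j\omega m}=\sum_n v[n]e^{-j(\omega/R)n}=V(e^{j\omega/R}).
\end{equation*}
Substituting Step 1 gives $\frac{1}{R}\sum_r X(e^{j(\omega+2\pi r)/R})$. Replacing $r$ by $-r$ modulo $R$ and using the $2\pi$-periodicity of $X$ yields exactly \eqref{eq:decimation_lemma}.

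\textbf{Step 3 (interpretation).} Each term $X(e^{j(\omega-2\pi r)/R})$ is the original spectrum stretched by a factor $R$ and shifted by $2\pi r$, and every term carries the same weight $1/R$. These are the $R$ equal-weight replicas. Now suppose $X$ is supported in $|\omega|\le W$ on $[-\pi,\pi)$. The $r$-th replica then occupies $|\omega-2\pi r|\le RW$ modulo $2\pi$. Adjacent replicas are separated by $2\pi$, so they are disjoint precisely when $2RW\le 2\pi$, i.e.\ $W\le\pi/R$. Conversely, if $W>\pi/R$ the supports overlap, so distinct frequencies $\omega_0$ and $\omega_0+2\pi k/R$ map to the same $\omega$; this is aliasing. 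For the steering vectors used later (pure tones), this reduces to the observation that $e^{-j2\pi\psi Rm}$ is invariant under $\psi\mapsto\psi+k/R$.

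The only real obstacle is technical. If $x$ is merely in $\ell^2$, or is a finite-length steering vector, the absolute-convergence interchange in Step 2 must be justified. I would handle this either through Parseval and a density argument, or by noting that in the paper's setting the sequences have finite support, where all sums are finite and no issue arises.
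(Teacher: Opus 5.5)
Your argument is correct. The comb identity for $p_R[n]$, the compression step $Y(e^{j\omega})=V(e^{j\omega/R})$, and the re-indexing $r\mapsto R-r$ together with the $2\pi$-periodicity of $X(e^{j\cdot})$ give exactly \eqref{eq:decimation_lemma}. The paper gives no proof of Lemma~\ref{lemma:decimation}; it only recalls it as the classical decimation property. Your comb-sequence derivation is the standard textbook proof that the paper relies on implicitly.

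Step~3 needs two small corrections. First, an individual replica $X(e^{j(\omega-2\pi r)/R})$ is $2\pi R$-periodic in $\omega$, since shifting $\omega$ by $2\pi$ turns the $r$-th term into the $(r-1)$-th. Its support is therefore $|\omega-2\pi r|\le RW$ modulo $2\pi R$, not modulo $2\pi$. Read literally, your phrasing would place every replica on the same set, although your disjointness count (centers spaced by $2\pi$, half-width $RW$) does use the correct $2\pi R$ picture. Second, the criterion $W\le\pi/R$ assumes a band centered at $\omega=0$, as the lemma's wording does. For a complex sequence whose band sits elsewhere, what matters is that the band's total width not exceed $2\pi/R$. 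Recovering $x$ from $y$ also requires knowing where the band lies, which is exactly the information the paper says is missing in the angle domain (Section~\ref{subsec:asymmetric}).

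Finally, the paper uses the lemma only as motivation. It proves Proposition~\ref{prop:multi_domain_alias} by the direct tone invariance you mention at the end, $e^{-j2\pi(m-1)N_{\mathrm{s}}(\psi+p/N_{\mathrm{s}})}=e^{-j2\pi(m-1)N_{\mathrm{s}}\psi}$. Your $\ell^2$/density extension is sound but not needed for anything later in the paper.
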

Since the steering responses along the spatial, frequency, and temporal modes take the form of complex exponentials indexed by antennas, subcarriers, and OFDM symbols, respectively, Lemma~\ref{lemma:decimation} indicates that uniform decimation over these indices gives rise to periodic equivalence classes in the corresponding angle, delay, and Doppler parameters. This leads to the following proposition.

\begin{prop}\label{prop:multi_domain_alias}
    Under uniform SFT-domain decimation by $(N_{\mathrm{s}},N_{\mathrm{f}},N_{\mathrm{t}})$, the parameter tuple $(\psi,\tau,\nu)$ and its shifted counterpart generate identical decimated
    observations. Therefore, they satisfy the following equivalence relation:
\begin{equation}\label{eq:alias_equiv}
\scalebox{0.94}{$
(\psi,\tau,\nu)
\sim
(\psi+\frac{p}{N_{\mathrm{s}}},
\tau+\frac{q}{N_{\mathrm{f}}\Delta f},
\nu+\frac{r}{N_{\mathrm{t}}\Delta T}),
\quad p,q,r\in\mathbb Z.
$}
\end{equation}
\end{prop}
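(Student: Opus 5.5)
The plan is to verify the equivalence directly at the level of the decimated steering vectors. By \eqref{eq:obs_tucker}, the decimated observation generated by a single rank-one component $g\,\mathbf{a}(\psi)\circ\mathbf{b}(\tau)\circ\mathbf{c}(\nu)$ is $g\,(\mathbf{P}_{\mathrm{s}}\mathbf{a}(\psi))\circ(\mathbf{P}_{\mathrm{f}}\mathbf{b}(\tau))\circ(\mathbf{P}_{\mathrm{t}}\mathbf{c}(\nu))$. It therefore suffices to show that each of the three decimated steering vectors is invariant under the corresponding shift. The statement then follows because the outer product and scaling by $g$ preserve this invariance, and the shifts in the three modes are independent.

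First, for the spatial mode, I would use the construction of $\mathbf{P}_{\mathrm{s}}$, under which the $m$-th observed entry is the antenna index $n_{\mathrm{an}}=1+(m-1)N_{\mathrm{s}}$. This gives $[\mathbf{P}_{\mathrm{s}}\mathbf{a}(\psi)]_m=e^{-j2\pi(m-1)N_{\mathrm{s}}\psi}$. Replacing $\psi$ by $\psi+p/N_{\mathrm{s}}$ multiplies this entry by $e^{-j2\pi(m-1)p}=1$ for every integer $p$ and every $m$.

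Second, the same computation applies to the other two modes. For delay, $[\mathbf{P}_{\mathrm{f}}\mathbf{b}(\tau)]_m=e^{-j2\pi(m-1)N_{\mathrm{f}}\Delta f\tau}$, and the shift $\tau\mapsto\tau+q/(N_{\mathrm{f}}\Delta f)$ introduces the factor $e^{-j2\pi(m-1)q}=1$. For Doppler, $[\mathbf{P}_{\mathrm{t}}\mathbf{c}(\nu)]_m=e^{j2\pi(m-1)N_{\mathrm{t}}\Delta T\nu}$, and the shift $\nu\mapsto\nu+r/(N_{\mathrm{t}}\Delta T)$ introduces the factor $e^{j2\pi(m-1)r}=1$.

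Third, I would tie this to Lemma~\ref{lemma:decimation} to explain why the equivalence appears. Viewing, for instance, $x[n]=e^{-j2\pi n\psi}$ as a line spectrum at normalized frequency $\omega_0=-2\pi\psi$, decimation by $N_{\mathrm{s}}$ maps it to frequency $N_{\mathrm{s}}\omega_0$ modulo $2\pi$. Two angles whose images coincide modulo $2\pi$, i.e.\ differ by multiples of $1/N_{\mathrm{s}}$, fall on the same replica in \eqref{eq:decimation_lemma}. This is the spectral counterpart of the direct computation above, and it identifies the period of each equivalence class.

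Finally, I would combine the three modes. Since the shifted tuple produces identical factor columns in every mode, it produces identical decimated observation tensors, for any gain and hence for any superposition. This establishes \eqref{eq:alias_equiv}.

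The only step needing care is bookkeeping: the index offset $n-1$ and the sign conventions of the three steering vectors. These affect only a global phase of the form $e^{\pm j2\pi\cdot\text{integer}}$, so I do not expect a real obstacle. The one point I would state explicitly is that the equivalence concerns observations produced by the exact parameters $(\psi,\tau,\nu)$, not grid membership, so it applies equally to on-grid and off-grid paths.
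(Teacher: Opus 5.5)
Your proposal is correct and follows essentially the same route as the paper: substitute the selected indices $n=1+(m-1)N_\xi$ into each steering vector and observe that the shift multiplies every decimated entry by $e^{\mp j2\pi(m-1)\cdot\text{integer}}=1$. The paper writes out only the spatial case and declares delay and Doppler analogous, so your explicit treatment of all three modes, the link to Lemma~\ref{lemma:decimation}, and the extension to the rank-one observation and superpositions are consistent additions.
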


\textit{Proof:}
For the spatial domain, for any $p\in\mathbb{Z}$, the $m$-th sampled response satisfies
\begin{equation}
\begin{aligned}
\bigl[\mathbf{P}_{\mathrm{s}}\mathbf{a}(\psi+p/N_{\mathrm{s}})\bigr]_m
&= e^{-j2\pi (m-1)N_{\mathrm{s}}\psi} e^{-j2\pi(m-1)p}  \\
&= \bigl[\mathbf{P}_{\mathrm{s}}\mathbf{a}(\psi)\bigr]_m,
\end{aligned}
\end{equation}
where the last equality follows from $m,p\in\mathbb{Z}$. Hence, the spatial alias equivalence is established. Delay and Doppler cases follow analogously for the decimations by $N_{\mathrm{f}}$ and $N_{\mathrm{t}}$, respectively. This completes the proof.$\hfill \blacksquare$

Proposition~\ref{prop:multi_domain_alias} shows that uniform
SFT-domain decimation renders certain shifted parameter tuples
indistinguishable. To characterize this ambiguity under the discrete signal model, we construct the ADD-domain dictionaries on periodic discrete Fourier transform (DFT) grids, with the grid sizes $K_{\mathrm{ang}}$, $K_{\mathrm{de}}$, and $K_{\mathrm{do}}$ chosen as integer multiples of decimation factors $N_{\mathrm{s}}$, $N_{\mathrm{f}}$, and $N_{\mathrm{t}}$, respectively. The associated reduced grid sizes are defined as $\bar K_{\mathrm{ang}} \triangleq K_{\mathrm{ang}}/N_{\mathrm{s}}$, $\bar K_{\mathrm{de}} \triangleq K_{\mathrm{de}}/N_{\mathrm{f}}$, and $\bar K_{\mathrm{do}} \triangleq K_{\mathrm{do}}/N_{\mathrm{t}}$.
Under this condition, the continuous parameter shifts
in~\eqref{eq:alias_equiv} reduce to cyclic index displacements by
$\bar K_{\mathrm{ang}}$, $\bar K_{\mathrm{de}}$, and
$\bar K_{\mathrm{do}}$ on the corresponding DFT grids.

To describe the discrete counterpart of the equivalence relation in Proposition~\ref{prop:multi_domain_alias}, let $d\in\{\mathrm{ang},\mathrm{de},\mathrm{do}\}$ with $(N_{\mathrm{ang}},N_{\mathrm{de}},N_{\mathrm{do}})=(N_{\mathrm{s}},N_{\mathrm{f}},N_{\mathrm{t}})$. For each $i=1,\ldots,\bar K_d$, define the corresponding one-dimensional aliasing group as
\begin{equation}\label{eq:alias_group_generic}
\mathcal{A}^{d}_{i}
=
\bigl\{
\langle i+p\bar K_d\rangle_{K_d}
\,:\, p=0,\ldots,N_d-1
\bigr\},
\end{equation}
where $\langle n\rangle_K \triangleq ((n{-}1)\bmod K)+1$ denotes the modulo-$K$
indexing operator under 1-based indexing.  Extending this
definition to the joint ADD-domain via the Cartesian product gives
the 3D aliasing group
\begin{equation}\label{eq:3d_alias_group}
\mathcal{A}_{i,j,k}
\triangleq
\{(i',j',k') : i'\in\mathcal{A}^{\mathrm{ang}}_{i},\;
j'\in\mathcal{A}^{\mathrm{de}}_{j},\;
k'\in\mathcal{A}^{\mathrm{do}}_{k}\},
\end{equation}
with $|\mathcal{A}_{i,j,k}|=N_{\mathrm{s}}N_{\mathrm{f}}N_{\mathrm{t}}$. This discrete aliasing structure yields the following proposition on non-identifiability under multi-domain decimation.

\begin{prop}
    \label{prop:3d_folding}
   For the ground-truth ADD-domain channel $\boldsymbol{\mathcal{G}}$ and any
alias-equivalent ADD-domain channel 
$\boldsymbol{\mathcal{G}}'\neq\boldsymbol{\mathcal{G}}$,
    \begin{equation}\label{eq:group_sum_equal}
    \sum_{(i',j',k')\in\mathcal{A}_{i,j,k}}
    [\boldsymbol{\mathcal{G}}]_{i',j',k'}
    =
    \sum_{(i',j',k')\in\mathcal{A}_{i,j,k}}
    [\boldsymbol{\mathcal{G}}']_{i',j',k'},
    \quad \forall\, i,j,k
    \end{equation}
    implies
    \begin{equation}\label{eq:same_obs}
    \boldsymbol{\mathcal{G}}
    \times_1 \mathbf{A}_{\mathrm{o}}
    \times_2 \mathbf{B}_{\mathrm{o}}
    \times_3 \mathbf{C}_{\mathrm{o}}
    \;=\;
    \boldsymbol{\mathcal{G}}'
    \times_1 \mathbf{A}_{\mathrm{o}}
    \times_2 \mathbf{B}_{\mathrm{o}}
    \times_3 \mathbf{C}_{\mathrm{o}}.
    \end{equation}
Hence, $\boldsymbol{\mathcal{G}}$ and its alias-equivalent
 counterpart $\boldsymbol{\mathcal{G}}'$ are observationally
indistinguishable under multi-domain decimation.
    \end{prop}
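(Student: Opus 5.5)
The plan is to show that on the DFT grids the subsampled factor matrices have columns that repeat cyclically. The observation map then depends on $\boldsymbol{\mathcal{G}}$ only through its aliasing-group sums.

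\emph{Column periodicity.} Take the angle dictionary on the DFT grid $\bar\psi_i=(i-1)/K_{\mathrm{ang}}$ (mod 1). Since $K_{\mathrm{ang}}=N_{\mathrm{s}}\bar K_{\mathrm{ang}}$, shifting the index by $\bar K_{\mathrm{ang}}$ shifts the parameter by $\bar K_{\mathrm{ang}}/K_{\mathrm{ang}}=1/N_{\mathrm{s}}$. Proposition~\ref{prop:multi_domain_alias} with $p=1$ then gives $\mathbf{P}_{\mathrm{s}}\mathbf{a}(\bar\psi_{\langle i+\bar K_{\mathrm{ang}}\rangle_{K_{\mathrm{ang}}}})=\mathbf{P}_{\mathrm{s}}\mathbf{a}(\bar\psi_i)$. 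The wrap-around at $K_{\mathrm{ang}}$ is harmless, because a shift of the directional cosine by an integer leaves even the full $\mathbf{a}(\cdot)$ unchanged.

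Consequently every column of $\mathbf{A}_{\mathrm{o}}$ indexed by $i'\in\mathcal{A}^{\mathrm{ang}}_i$ equals a common vector $\bar{\mathbf{a}}_i$. The same argument with $q$ and $r$ gives the analogous statements for $\mathbf{B}_{\mathrm{o}}$ with grid spacing $1/(K_{\mathrm{de}}\Delta f)$ and shift $1/(N_{\mathrm{f}}\Delta f)$, and for $\mathbf{C}_{\mathrm{o}}$ with grid spacing $1/(K_{\mathrm{do}}\Delta T)$ and shift $1/(N_{\mathrm{t}}\Delta T)$. In matrix form, $\mathbf{A}_{\mathrm{o}}=\bar{\mathbf{A}}\mathbf{E}_{\mathrm{ang}}$, where $\mathbf{E}_{\mathrm{ang}}\in\{0,1\}^{\bar K_{\mathrm{ang}}\times K_{\mathrm{ang}}}$ has $[\mathbf{E}_{\mathrm{ang}}]_{i,i'}=1$ exactly when $i'\in\mathcal{A}^{\mathrm{ang}}_i$. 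The groups partition $\{1,\dots,K_{\mathrm{ang}}\}$, so each column of $\mathbf{E}_{\mathrm{ang}}$ has a single one. The delay and Doppler factors factor in the same way.

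\emph{Factoring the multilinear map.} Use the mode-product identity $\boldsymbol{\mathcal{G}}\times_1(\bar{\mathbf{A}}\mathbf{E}_{\mathrm{ang}})=(\boldsymbol{\mathcal{G}}\times_1\mathbf{E}_{\mathrm{ang}})\times_1\bar{\mathbf{A}}$, and likewise for modes 2 and 3. This gives
\[
\boldsymbol{\mathcal{G}}\times_1\mathbf{A}_{\mathrm{o}}\times_2\mathbf{B}_{\mathrm{o}}\times_3\mathbf{C}_{\mathrm{o}}
=\bar{\boldsymbol{\mathcal{G}}}\times_1\bar{\mathbf{A}}\times_2\bar{\mathbf{B}}\times_3\bar{\mathbf{C}},\qquad
\bar{\boldsymbol{\mathcal{G}}}\triangleq\boldsymbol{\mathcal{G}}\times_1\mathbf{E}_{\mathrm{ang}}\times_2\mathbf{E}_{\mathrm{de}}\times_3\mathbf{E}_{\mathrm{do}}.
\]
Entrywise, $[\bar{\boldsymbol{\mathcal{G}}}]_{i,j,k}=\sum_{(i',j',k')\in\mathcal{A}_{i,j,k}}[\boldsymbol{\mathcal{G}}]_{i',j',k'}$, which is the group sum in \eqref{eq:group_sum_equal}.

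\emph{Conclusion.} If $\boldsymbol{\mathcal{G}}$ and $\boldsymbol{\mathcal{G}}'$ have equal group sums for all $(i,j,k)$, then $\bar{\boldsymbol{\mathcal{G}}}=\bar{\boldsymbol{\mathcal{G}}}'$. By the factorization, \eqref{eq:same_obs} follows immediately. With the same noise realization the observations $\boldsymbol{\mathcal{Y}}$ coincide, so the two tensors are indistinguishable.

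The main obstacle is bookkeeping rather than analysis. First, the DFT grids must be pinned down concretely (uniform on one period of each parameter, i.e.\ $[0,1)$, $[0,1/\Delta f)$ and $[0,1/\Delta T)$) so that the index shift by $\bar K_d$ exactly realizes the period $1/N_d$ shift of Proposition~\ref{prop:multi_domain_alias}. Second, the cyclic wrap must be checked to be consistent with the full-period invariance of the steering vectors. I would state this grid convention explicitly at the start, since the statement only says ``periodic DFT grids.''
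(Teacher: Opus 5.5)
Your proposal is correct and follows essentially the same route as the paper: both use Proposition~\ref{prop:multi_domain_alias} to show that the decimated dictionary columns coincide within each one-dimensional aliasing group, and then regroup the rank-one expansion over the disjoint partition $\{\mathcal{A}_{i,j,k}\}$. Your factorization $\mathbf{A}_{\mathrm{o}}=\bar{\mathbf{A}}\mathbf{E}_{\mathrm{ang}}$ combined with the mode-product identity is simply a compact matrix form of that regrouping, and fixing the DFT grid convention explicitly (one full period per parameter, so that an index shift by $\bar K_d$ realizes the period-$1/N_d$ shift) makes precise a step the paper leaves implicit.
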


\textit{Proof:}
The noiseless observation admits the following rank-one expansion over
the ADD-domain grid:
\begin{equation}\label{eq:obs_expand}
\boldsymbol{\mathcal{Y}}_0
=
\sum_{i=1}^{K_{\mathrm{ang}}}
\sum_{j=1}^{K_{\mathrm{de}}}
\sum_{k=1}^{K_{\mathrm{do}}}
[\boldsymbol{\mathcal{G}}]_{i,j,k}\,
\mathbf{a}_{\mathrm{o},i}\circ
\mathbf{b}_{\mathrm{o},j}\circ
\mathbf{c}_{\mathrm{o},k},
\end{equation}
where $\mathbf{a}_{\mathrm{o},i}$, $\mathbf{b}_{\mathrm{o},j}$,
and $\mathbf{c}_{\mathrm{o},k}$ are the $i$-th, $j$-th, and $k$-th
columns of $\mathbf{A}_{\mathrm{o}}$, $\mathbf{B}_{\mathrm{o}}$, and
$\mathbf{C}_{\mathrm{o}}$, respectively. By the aliasing identity in
Proposition~\ref{prop:multi_domain_alias}, the columns of each decimated
dictionary are invariant within the one-dimensional
aliasing groups. Therefore, for any
$i'\in\mathcal{A}^{\mathrm{ang}}_{i}$,
$j'\in\mathcal{A}^{\mathrm{de}}_{j}$, and
$k'\in\mathcal{A}^{\mathrm{do}}_{k}$,
\begin{equation}\label{eq:column_equiv}
\mathbf{a}_{\mathrm{o},i'}=\bar{\mathbf{a}}_{i},\qquad
\mathbf{b}_{\mathrm{o},j'}=\bar{\mathbf{b}}_{j},\qquad
\mathbf{c}_{\mathrm{o},k'}=\bar{\mathbf{c}}_{k},
\end{equation}
where $\bar{\mathbf{a}}_{i}$, $\bar{\mathbf{b}}_{j}$, and
$\bar{\mathbf{c}}_{k}$ denote the common columns associated with the
corresponding aliasing groups.

Since the aliasing sets
$\{\mathcal{A}_{i,j,k}\}_{i=1,j=1,k=1}^{\bar K_{\mathrm{ang}},
\bar K_{\mathrm{de}},\bar K_{\mathrm{do}}}$
form a disjoint partition of the ADD-domain grid, the summation in
\eqref{eq:obs_expand} can be regrouped according to these aliasing sets.
Moreover, by the column equivalence, all entries within the same
$\mathcal{A}_{i,j,k}$ share the identical outer-product factor
$\bar{\mathbf{a}}_i\circ\bar{\mathbf{b}}_j\circ\bar{\mathbf{c}}_k$.
Hence,
\begin{equation}\label{eq:obs_grouped}
\begin{aligned}
\boldsymbol{\mathcal{Y}}_0
&=
\sum_{i=1}^{\bar K_{\mathrm{ang}}}
\sum_{j=1}^{\bar K_{\mathrm{de}}}
\sum_{k=1}^{\bar K_{\mathrm{do}}}
\sum_{(i',j',k')\in\mathcal{A}_{i,j,k}}
[\boldsymbol{\mathcal{G}}]_{i',j',k'}\,
\mathbf{a}_{\mathrm{o},i'}\circ
\mathbf{b}_{\mathrm{o},j'}\circ
\mathbf{c}_{\mathrm{o},k'} \\
&=
\sum_{i=1}^{\bar K_{\mathrm{ang}}}
\sum_{j=1}^{\bar K_{\mathrm{de}}}
\sum_{k=1}^{\bar K_{\mathrm{do}}}
\left(
\sum_{(i',j',k')\in\mathcal{A}_{i,j,k}}
[\boldsymbol{\mathcal{G}}]_{i',j',k'}
\right)
\bar{\mathbf{a}}_i\circ
\bar{\mathbf{b}}_j\circ
\bar{\mathbf{c}}_k,
\end{aligned}
\end{equation}
where the inner sum is the group-wise sum of
$\boldsymbol{\mathcal{G}}$ over the aliasing group
$\mathcal{A}_{i,j,k}$. Therefore, the decimated observation
$\boldsymbol{\mathcal{Y}}_0$ depends on $\boldsymbol{\mathcal{G}}$
only through these group-wise aggregate coefficients. This completes the proof.
 $\hfill\blacksquare$

\begin{remark}\label{rem:aliasing_ambiguity}
    Proposition~\ref{prop:multi_domain_alias} first establishes that uniform
    SFT-domain decimation can make shifted angle-delay-Doppler tuples produce
    identical observations. Proposition~\ref{prop:3d_folding} further reveals
    the ambiguity introduced by uniform multi-domain decimation in the
    ADD domain. Specifically, each ADD-domain aliasing group contains
    $N_{\mathrm{s}}N_{\mathrm{f}}N_{\mathrm{t}}$ aliased positions that are
    observationally indistinguishable from the decimated SFT-domain
    observations. Hence, physically distinct ADD-domain components may give
    rise to the same decimated observations, yielding a many-to-one mapping
    from the original sparse representation to the observations and thereby
    obscuring the original sparse support.
\end{remark}

    \vspace{-4.5mm}
\subsection{Asymmetric Aliasing Characteristics Across Domains}
\label{subsec:asymmetric}
The practical impact of aliasing differs substantially across domains because the delay, Doppler, and angle parameters obey distinct physical constraints.

\subsubsection{Delay Domain}

In the delay domain, channel energy is typically confined to a compact low-delay support, enabling large frequency-domain decimation.
 According to~\cite{3GPP_38901}, about $80\%$ of non-line-of-sight (NLoS) urban macro (UMa) realizations have delay spreads below $1\,\mu\mathrm{s}$. For $\Delta f=60$~kHz, $N_{\mathrm{f}}=8$ yields an unambiguous delay window of $1/(N_{\mathrm{f}}\Delta f)\approx2.08\,\mu\mathrm{s}$, which generally covers the dominant low-delay components. However, when $N_{\mathrm{f}}$ increases to $16$, this window shrinks to about $1.04\,\mu\mathrm{s}$, so high-delay paths may be folded into the low-delay region, as illustrated in the delay-domain row of Fig.~\ref{Fig:SFTDomainDecimationInducedADDFolding}.
Therefore, hard selection of the first 
$\bar{K}_{\mathrm{de}}$ delay bins 
becomes insufficient because it may discard  high-delay 
components and cannot distinguish true low-delay paths from folded 
high-delay replicas.

\subsubsection{Doppler Domain}
The Doppler domain follows a similar low-frequency concentration principle. The maximum Doppler frequency is given by
$\nu_{\max}=f_{\mathrm{c}} v_{\max}/c$, which remains moderate even under high mobility. For example, when $f_{\mathrm{c}}=15$~GHz, $\Delta f=60$~kHz, and $v_{\max}=120$~km/h, we have $\nu_{\max}\approx1.67$~kHz and $\nu_{\max}\Delta T\approx0.03$ per OFDM symbol. Thus, most Doppler energy is confined to the low-frequency region. When the temporal pilot spacing $N_{\mathrm{t}}$ satisfies $\nu_{\max}<1/(2N_{\mathrm{t}}\Delta T)$, high-Doppler components are not folded into the low-Doppler region, which makes low-Doppler bin selection feasible. Accordingly, we retain the low-Doppler bins and set the Doppler grid size to $K_{\mathrm{do}}=S_{\nu}M_{\mathrm{sym}}$, where $S_{\nu}\ge1$ denotes the Doppler oversampling factor.
Nevertheless, low-Doppler concentration does not guarantee an accurate Doppler-domain representation when only a few pilot symbols are available. The limited temporal aperture leads to a coarse Doppler grid, and off-grid Doppler components may leak into neighboring bins, weakening the low-Doppler sparsity.

\subsubsection{Angle Domain}
Unlike the delay and Doppler domains, the angle domain does not exhibit such low-frequency concentration.  The spatial frequency $\psi_l=d\cos(\vartheta_l)/\lambda$ is governed by the AoA, which is determined by the scattering geometry and may occupy arbitrary locations over the Nyquist interval. According to the 3GPP channel model~\cite{3GPP_38901}, multipath AoAs in typical UMa scenarios can span a wide angular range. Therefore, no reliable prior can indicate whether a path should lie in a low- or high-frequency angular grid region.
Under $N_{\mathrm{s}}$-fold spatial decimation, a spatial frequency $\psi$ is indistinguishable from its wrapped replicas
$\psi_r=(\psi+r/N_{\mathrm{s}})\bmod 1$, $r=1,\ldots,N_{\mathrm{s}}-1$,
so the correct alias branch cannot be determined from decimated pilots alone. As shown in Fig.~\ref{Fig:SFTDomainDecimationInducedADDFolding}, unlike delay- or Doppler-domain folding, angle-domain folding introduces aliases over the entire angular grid, degrading spatial extrapolation even for modest decimation factors.

\begin{figure}[!t]
	\centering   
	\includegraphics[width=0.49\textwidth]{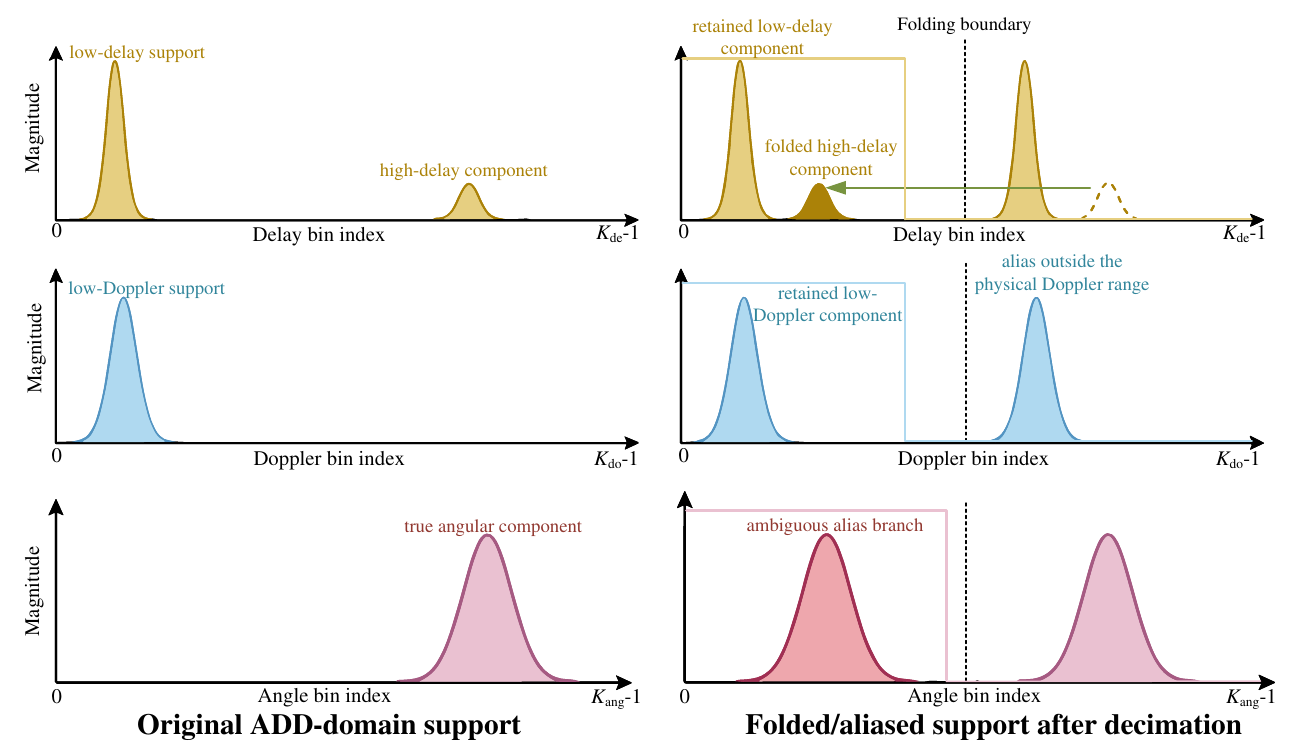} 
	\caption{Illustration of decimation-induced ADD-domain support folding with $N_{\mathrm{s}}=N_{\mathrm{f}}=N_{\mathrm{t}}=2$. The original delay, Doppler, and angle supports are shown together with their folded counterparts after SFT-domain decimation.}
	\label{Fig:SFTDomainDecimationInducedADDFolding} 	
    \vspace{-6mm}
\end{figure}

    \vspace{-3mm}
\subsection{Support-Prior-Assisted De-Aliasing Formulation}
\label{subsec:support_prior}
The preceding analysis shows that the dominant impairment arises from the loss of location-discriminating information within each ADD-domain aliasing group, rather than from intrinsic
similarity among physical components. We therefore exploit ADD-domain priors to compensate for this information loss, resolving delay-domain aliasing and angle-domain ambiguity while mitigating Doppler-domain energy leakage.

In practice, such prior information is often available in modern wireless systems, e.g., from channel knowledge maps (CKMs)~\cite{Citezeng2024CKMtutorialSCSI, CiteZengyongCKMHybridBeamforming}, historical CSI and sensing-assisted perception.
As the delay and angular supports depend on the scattering geometry and user location, they remain stable over many channel coherence intervals and hence require only infrequent acquisition, whereas Doppler priors are governed by the user velocity, which can be estimated from radar echoes in integrated sensing and communication (ISAC)-enabled systems with marginal additional overhead.

To avoid the signaling and computational overhead  associated with acquiring accurate continuous physical parameters, we represent  the prior information as binary support priors. A joint ADD-domain support can capture cross-domain coupling, but its size scales as \(K_{\mathrm{ang}} \times K_{\mathrm{de}} \times K_{\mathrm{do}}\), rendering acquisition, storage, and real-time update costly. Hence, we employ domain-wise marginal support priors, which provide a practical separated representation that retains the essential de-aliasing information in each domain while improving robustness to support errors. Specifically, we introduce
\begin{equation}\label{eq:support_vectors}
    \mathbf{s}_{\mathrm{ang}} \in \{0,1\}^{K_{\mathrm{ang}}\times 1}, 
    \mathbf{s}_{\mathrm{de}}  \in \{0,1\}^{K_{\mathrm{de}}\times 1},  
    \mathbf{s}_{\mathrm{do}}  \in \{0,1\}^{K_{\mathrm{do}}\times 1},
\end{equation}
where \([\mathbf{s}_x]_k = 1\) indicates that the \(k\)-th grid in domain \(x \in \{\mathrm{ang}, \mathrm{de}, \mathrm{do}\}\) is estimated to contain a significant component.

Given the marginal binary support priors, a key issue is how to exploit them for de-aliasing. Hard filtering, which masks unsupported grid points, is non-adaptive and discards useful aliasing information, failing to resolve residual ambiguity under imperfect priors, while model-based regularization requires handcrafted regularizers matched to the three-dimensional aliasing structure and is sensitive to model mismatch. 
Conventional AI-based prior-injection schemes, such as channel concatenation or feature-level fusion, often map external priors into latent features, obscuring their physical semantics and leading to insufficient exploitation of the underlying aliasing structure.
Motivated by this, we reformulate de-aliasing as a prior-conditioned operator learning problem, in which the recovery operator is learned directly from data under the guidance of the support priors. Since the priors serve as auxiliary inputs rather than hard constraints, the formulation tolerates moderate support errors by leveraging reliable support information while suppressing misleading entries.

In this formulation, we use the LS estimate as the network input, defined as
\begin{equation}\label{eq:LS_init}
\boldsymbol{\mathcal{G}}_{\mathrm{LS}}
=
\boldsymbol{\mathcal{Y}}
\times_1 \mathbf{A}_{\mathrm{o}}^{\dagger}
\times_2 \mathbf{B}_{\mathrm{o}}^{\dagger}
\times_3 \mathbf{C}_{\mathrm{o}}^{\dagger}.
\end{equation}
The LS estimate \(\boldsymbol{\mathcal{G}}_{\mathrm{LS}}\)  retains the group-wise aggregate information in the decimated
observation but cannot resolve the corresponding intra-group ambiguity,
thereby providing a coarse initialization for network-based de-aliasing. The recovery problem is therefore cast as estimating \(\boldsymbol{\mathcal{G}}\) from \(\boldsymbol{\mathcal{G}}_{\mathrm{LS}}\) conditioned on the marginal support priors:
\begin{equation}\label{eq:nn_dealiasing}
    \widehat{\boldsymbol{\mathcal{G}}}
    = f_{\boldsymbol{\theta}}\!\left(
    \boldsymbol{\mathcal{G}}_{\mathrm{LS}},\;
    \mathbf{s}_{\mathrm{ang}},\;
    \mathbf{s}_{\mathrm{de}},\;
    \mathbf{s}_{\mathrm{do}}\right),
\end{equation}
where $f_{\boldsymbol{\theta}}$ denotes a neural network that jointly exploits the LS-initialized tensor and the marginal support priors to produce the de-aliased ADD-domain estimate $\widehat{\boldsymbol{\mathcal{G}}}$. The extrapolated SFT-domain channel is subsequently obtained as
\begin{equation}\label{eq:tucker_extrap}
    \widehat{\boldsymbol{\mathcal{H}}}_{\mathrm{extra}}
    = \widehat{\boldsymbol{\mathcal{G}}}
    \times_1 \mathbf{A}(\overline{\boldsymbol{\psi}})
    \times_2 \mathbf{B}(\overline{\boldsymbol{\tau}})
    \times_3 \tilde{\mathbf{C}}(\overline{\boldsymbol{\nu}}),
\end{equation}
where \(\tilde{\mathbf{C}}(\overline{\boldsymbol{\nu}})
=\left[\tilde{\mathbf{c}}\left(\bar{\nu}_1\right), \ldots,
\tilde{\mathbf{c}}\left(\bar{\nu}_{K_{\mathrm{do}}}\right)\right]
\in \mathbb{C}^{N_{\mathrm{pred}} \times K_{\mathrm{do}}}\) denotes
the factor matrix in the temporal domain for channel prediction,
\(N_{\mathrm{pred}}\) is the
prediction length, and \(\tilde{\mathbf{c}}(\bar{\nu})\) denotes the
Doppler-domain steering vector with entry
\([\tilde{\mathbf{c}}(\bar{\nu})]_{n_{\mathrm{pred}}}
= e^{j2\pi\left(T_0 + n_{\mathrm{pred}}\Delta T\right)
\bar{\nu}}\), \(n_{\mathrm{pred}} = 1, \ldots, N_{\mathrm{pred}}\).
Here, \(T_0 = (M_{\mathrm{sym}}-1)N_{\mathrm{t}}\Delta T\) denotes the
prediction origin.

\section{Proposed Network Architecture}
\label{sec_network}

Based on the formulation in~\eqref{eq:nn_dealiasing}, we develop a TANN for ADD-domain de-aliasing. TANN applies axial-attention along the angle, delay, and Doppler domains to capture domain-wise correlations.  To combat multi-domain decimation-induced aliasing, lightweight multi-scale CNN-based gates with  feature-wise linear modulation (FiLM) are employed to inject support priors  into the learned features. Axis-wise gated residual connections are further incorporated to adaptively regulate prior-guided feature refinement.
Moreover, a dual-domain loss is designed to jointly enforce reconstruction consistency.  Owing to the fixed ADD-domain grid, TANN preserves a unified network architecture while maintaining scalability across different SFT-domain decimation configurations. Such configuration scalability provides the basis for mixed-configuration training, enabling robust single-model performance across diverse pilot configurations.

\begin{figure*}[!t]
    \centering
    \includegraphics[width=0.87\textwidth]{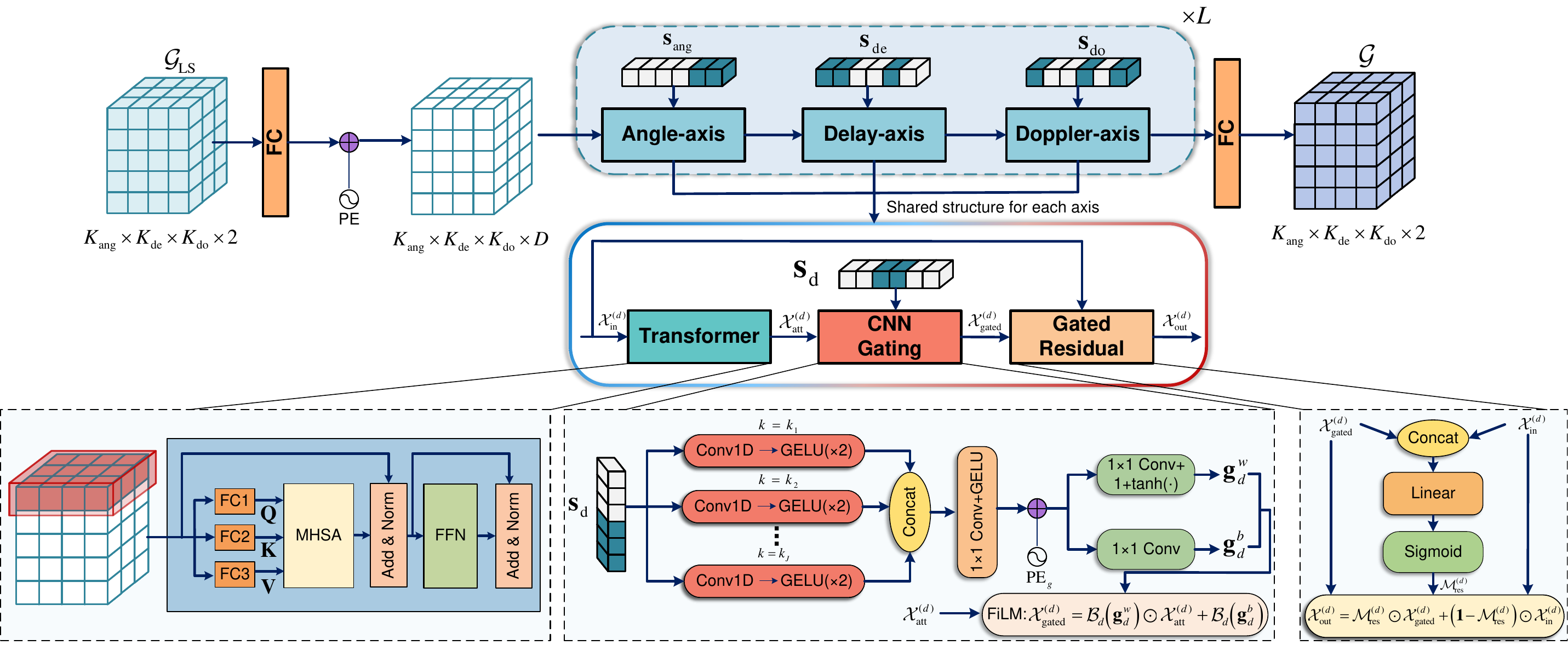}
    \captionsetup{justification=raggedright,singlelinecheck=false}
    \caption{Overall architecture of the proposed TANN.}
    \label{Fig:TANNStructure}
    \vspace{-7mm}
\end{figure*}

    \vspace{-3mm}
\subsection{Overall Architecture}
\label{subsec:embedding}
The overall architecture of TANN is depicted in Fig.~\ref{Fig:TANNStructure}. The network takes the LS-initialized ADD-domain tensor together with domain-wise support priors as inputs, and outputs the de-aliased ADD-domain estimate through three successive stages: a feature embedding with 3D positional encoding, an \(L\)-layer axial-attention backbone, and an output projection followed by multilinear channel reconstruction.

\subsubsection{Feature Embedding and 3D Positional Encoding}
The input
$\boldsymbol{\mathcal{X}}_{\mathrm{in}}\in\mathbb{R}^{K_{\mathrm{ang}}\times
K_{\mathrm{de}}\times K_{\mathrm{do}}\times 2}$ is obtained by stacking
the real and imaginary parts of $\boldsymbol{\mathcal{G}}_{\mathrm{LS}}$.
A point-wise linear layer embeds each grid point into a
$D$-dimensional feature space:
\begin{equation}\label{eq:embedding}
    \boldsymbol{\mathcal{X}}^{(0)}
    =
    \boldsymbol{\mathcal{X}}_{\mathrm{in}}\times_4\mathbf{W}_{\mathrm{emb}}
    +
    \boldsymbol{\mathcal{B}}_{\mathrm{emb}}
    \in
    \mathbb{R}^{K_{\mathrm{ang}}\times K_{\mathrm{de}}\times
    K_{\mathrm{do}}\times D},
\end{equation}
where $\mathbf{W}_{\mathrm{emb}}\in\mathbb{R}^{D\times 2}$ and
\(\boldsymbol{\mathcal{B}}_{\mathrm{emb}}\) denotes the tensor obtained by
broadcasting \(\mathbf{b}_{\mathrm{emb}}\in\mathbb{R}^{D}\) over the first
three modes. 

As discussed in Section~\ref{subsec:support_prior}, resolving ADD-domain aliasing fundamentally requires distinguishing aliased components by their positions in the ADD domains. 
However, Transformer blocks based solely on self-attention are permutation-equivariant and therefore unable to distinguish elements by their positions~\cite{Wang2025DP}. In conventional Transformer architectures, this limitation is typically addressed by one-dimensional (1D) positional encodings. Since the angle, delay, and Doppler axes exhibit distinct positional semantics, we extend the 1D sinusoidal positional encoding to a 3D separable form by superimposing independent axis-wise encodings:
\begin{equation}\label{eq:pe_add}
    \boldsymbol{\mathcal{X}}_{\mathrm{PE}}^{(0)}
    =
    \boldsymbol{\mathcal{X}}^{(0)}
    + \sum_{d\in\{\mathrm{ang},\mathrm{de},\mathrm{do}\}} \mathcal{B}_{d}\!\left(\mathbf{E}_{d}\right),
\end{equation}
where \(\mathbf{E}_{d}\in\mathbb{R}^{K_d\times D}\) is the standard sinusoidal positional encoding along axis \(d\) and \(\mathcal{B}_d(\cdot)\)
denotes the axis-wise
broadcast operator that maps \(\mathbb{R}^{K_d \times C}\) to \(\mathbb{R}^{K_{\mathrm{ang}} \times K_{\mathrm{de}} \times K_{\mathrm{do}} \times C}\) by expansion along the non-\(d\) dimensions.

\subsubsection{Multi-Layer Axial-Attention Backbone}
The encoded features  \(\boldsymbol{\mathcal{X}}_{\mathrm{PE}}^{(0)}\) are then processed by \(L\) axial-attention layers, each applying three axis-wise blocks in sequence:
\begin{equation}\label{eq:backbone}
    \boldsymbol{\mathcal{X}}^{(\ell)}
    =
    f_{\mathrm{do}}^{(\ell)}\!\left(
    f_{\mathrm{de}}^{(\ell)}\!\left(
    f_{\mathrm{ang}}^{(\ell)}\!\left(
    \boldsymbol{\mathcal{X}}^{(\ell-1)};\,\mathbf{s}_{\mathrm{ang}}
    \right);\,\mathbf{s}_{\mathrm{de}}
    \right);\,\mathbf{s}_{\mathrm{do}}
    \right), 
\end{equation}
where $\ell=1,\ldots,L$, \(f_{d}^{(\ell)}(\cdot;\mathbf{s}_d)\) integrates axis-wise multi-head self-attention (MHSA), support-prior-conditioned gating, and a gated residual connection, as detailed in Section~\ref{subsec:axial_layer}.

\subsubsection{Output Projection and SFT-Domain Channel Reconstruction}
After \(L\) axial-attention layers, a point-wise linear projection maps the features back to two channels:
\begin{equation}\label{eq:output}
    \boldsymbol{\mathcal{X}}_{\mathrm{out}}
    =
    \boldsymbol{\mathcal{X}}^{(L)}\times_4\mathbf{W}_{\mathrm{proj}}
    +
    \boldsymbol{\mathcal{B}}_{\mathrm{proj}}
    \in
    \mathbb{R}^{K_{\mathrm{ang}}\times K_{\mathrm{de}}\times
    K_{\mathrm{do}}\times 2},
\end{equation}
where \(\mathbf{W}_{\mathrm{proj}}\in\mathbb{R}^{2\times D}\), and
\(\boldsymbol{\mathcal{B}}_{\mathrm{proj}}\) denotes the tensor obtained by
broadcasting \(\mathbf{b}_{\mathrm{proj}}\in\mathbb{R}^{2}\) over the first
three modes. The resulting tensor is regarded as the de-aliased ADD-domain estimate $\widehat{\boldsymbol{\mathcal{G}}}$, from which the extrapolated SFT-domain channel is reconstructed via the multilinear transformation in~\eqref{eq:tucker_extrap}.

\vspace{-3mm}
\subsection{Axial-Attention Layer}
\label{subsec:axial_layer}

Applying standard MHSA to the flattened 3D tensor would require processing a sequence of length \(K_{\mathrm{ang}} K_{\mathrm{de}} K_{\mathrm{do}}\) with complexity \(\mathcal{O}\!\left((K_{\mathrm{ang}} K_{\mathrm{de}} K_{\mathrm{do}})^2 D\right)\), which is computationally prohibitive in practical systems. To reduce this complexity, we leverage the separable structure across the angle, delay, and Doppler domains and decompose the 3D attention in each of the \(L\) axial-attention layers into three successive 1D axis-wise operations, reducing the complexity to \(\mathcal{O}\!\left(K_{\mathrm{ang}} K_{\mathrm{de}} K_{\mathrm{do}} (K_{\mathrm{ang}} {+} K_{\mathrm{de}} {+} K_{\mathrm{do}}) D\right)\). 
This decomposition is implemented by the shared axis-wise architecture shown in Fig.~\ref{Fig:TANNStructure}. 
Specifically, within each axial-attention layer, the angle-, delay-, and Doppler-wise operations share the same axis-wise block, which consists of a Transformer, a support-conditioned CNN gating module, and a gated residual unit, while being conditioned on \(\mathbf{s}_{\mathrm{ang}}\), \(\mathbf{s}_{\mathrm{de}}\), and \(\mathbf{s}_{\mathrm{do}}\), respectively.

\subsubsection{Axis-Wise MHSA}
For axis \(d\) with length \(K_d\), all \(K_d\)-length sequences along that axis are extracted from the feature tensor \(\boldsymbol{\mathcal{X}}\in \mathbb{R}^{K_{\mathrm{ang}} \times K_{\mathrm{de}} \times K_{\mathrm{do}} \times D}\). In the first axial layer, \(\boldsymbol{\mathcal{X}}\) is initialized as the position-encoded input \(\boldsymbol{\mathcal{X}}_{\mathrm{PE}}^{(0)}\). This extraction yields \(N_d^{\perp} \triangleq \prod_{d' \neq d} K_{d'}\) independent sequences, and MHSA is then applied to them in parallel with \(N_h\) heads of dimension \(D_h = D / N_h\). For each sequence \(\tilde{\mathbf{X}}_d \in \mathbb{R}^{K_d \times D}\), head-specific projection matrices map the features into \(N_h\) representation subspaces. For head \(h\), the query, key, and value representations are
\begin{equation}\label{eq:qkv}
\bigl[\mathbf{Q}_h^{(d)},\mathbf{K}_h^{(d)},\mathbf{V}_h^{(d)}\bigr]
=
\tilde{\mathbf{X}}_d
\bigl[\mathbf{W}_{h,d}^{Q},\mathbf{W}_{h,d}^{K},\mathbf{W}_{h,d}^{V}\bigr],
\end{equation}
where \(\mathbf{W}_{h,d}^{Q}, \mathbf{W}_{h,d}^{K}, \mathbf{W}_{h,d}^{V} \in \mathbb{R}^{D \times D_h}\). Multi-head aggregation is performed via concatenation and linear projection with a learnable output matrix \(\mathbf{W}_d^O \in \mathbb{R}^{D \times D}\):
\begin{align}
    &\mathbf{Y}_h^{(d)} = \mathrm{softmax}\!\left(\mathbf{Q}_h^{(d)} \left(\mathbf{K}_h^{(d)}\right)^{T}/\sqrt{D_h}\right) \mathbf{V}_h^{(d)}
    , \label{eq:head}\\[3pt]
    &\mathrm{Att}^{(d)}(\tilde{\mathbf{X}}_d) = \mathrm{Concat}\!\left(\mathbf{Y}_1^{(d)},\, \ldots,\, \mathbf{Y}_{N_h}^{(d)}\right)\mathbf{W}_d^O, \label{eq:mhsa}
\end{align}
where  \(\mathrm{Concat}(\cdot)\) denotes concatenation along the feature dimension. Following standard Transformer architecture~\cite{vaswani2017attention}, each axis-wise MHSA block is followed by a position-wise feed-forward network (FFN), with pre-layer normalization and residual connections. The processed axis-wise sequences are then folded back into the 4D tensor form to obtain the axis-wise output feature tensor \(\boldsymbol{\mathcal{X}}_{\mathrm{att}}^{(d)} \in \mathbb{R}^{K_{\mathrm{ang}} \times K_{\mathrm{de}} \times K_{\mathrm{do}} \times D}\).

\subsubsection{Multi-Scale CNN Context Gating}
Following each axis-wise MHSA step, the corresponding domain-wise support prior 
\(\mathbf{s}_d\in \{0,1\}^{K_d\times1}\), with 
\(d \in \{\mathrm{ang}, \mathrm{de}, \mathrm{do}\}\), is injected through a lightweight context gating module.
The gate generates soft, data-adaptive multiplicative and additive modulation signals from the support vector. Compared with complex cross-attention or dense feature concatenation schemes, this design introduces minimal overhead while progressively refining features within the serial axial composition.

To generate the gating parameters from \(\mathbf{s}_d\), we employ a multi-scale 1D CNN with \(J\) parallel branches of distinct kernel sizes \(\kappa^{(j)}
\), so as to capture local support features and neighborhood contiguity. Small kernels capture narrow isolated peaks associated with dominant paths, whereas large kernels capture wider contiguous clusters induced by angular or delay spreading. Each branch \(j \in \{1, \ldots, J\}\) contains two stacked 1D convolutional layers with Gaussian error linear unit (GELU) activations, where the two-layer design enlarges the effective receptive field and the successive GELU nonlinearities enhance representation capacity:
\begin{equation}\label{eq:branch}
    \mathbf{h}_d^{(j)} =
    \mathrm{GELU}\!\left(
    \mathrm{Conv}^{\kappa^{(j)}
    }\!\left(
    \mathrm{GELU}\!\left(
    \mathrm{Conv}^{\kappa^{(j)}
    }\!\left(\mathbf{s}_d\right)
    \right)
    \right)\right),
\end{equation}
where $\mathbf{h}_d^{(j)} \in \mathbb{R}^{K_d \times D_g}$, $\mathbf{s}_d$ is treated as a single-feature 1D signal, and $D_g$ denotes the number of intermediate feature maps per branch. The multi-scale branch outputs are then concatenated along the feature dimension and compressed by a point-wise $1 \!\times\! 1$ convolution:
\begin{equation}\label{eq:fusion}
    \mathbf{f}_d =
    \mathrm{GELU}\!\left(
    \mathrm{Conv}^{1}\!\left(
    \left[\mathbf{h}_d^{(1)};\, \ldots ;\, \mathbf{h}_d^{(J)}\right]
    \right)\right)
    \in \mathbb{R}^{K_d \times D_g},
\end{equation}
where the fusion layer input has \(J D_g\) feature maps.

To encode positional semantics within the gate module, a fixed sinusoidal positional encoding \(\mathbf{E}_d^{(g)} \in \mathbb{R}^{K_d \times D_g}\) is added to \(\mathbf{f}_d\) before projection, so that the generated gates are position-aware. We then adopt FiLM-style conditioning, where support-driven features produce a multiplicative scale and an additive bias through two separate point-wise \(1 \!\times\! 1\) convolutional projections:
\begin{align}
    \mathbf{g}_d^{w} &= \sigma_g\!\left(\mathrm{Conv}^{1}\!\left(\mathbf{f}_d + \mathbf{E}_d^{(g)}\right)\right) \in \mathbb{R}^{K_d \times D}, \label{eq:gw}\\
    \mathbf{g}_d^{b} &= \mathrm{Conv}^{1}\!\left(\mathbf{f}_d + \mathbf{E}_d^{(g)}\right) \in \mathbb{R}^{K_d \times D}, \label{eq:gb}
\end{align}
where \(\sigma_g(\mathbf{z}) = 1 + \tanh(\mathbf{z})\) yields a gate centered at one with range
\((0, 2)\).  The modulated output is then written as
\begin{equation}\label{eq:gate_apply}
    \boldsymbol{\mathcal{X}}_{\mathrm{gated}}^{(d)} =
    \mathcal{B}_d\!\left(\mathbf{g}_d^{w}\right) \odot
    \boldsymbol{\mathcal{X}}_{\mathrm{att}}^{(d)} +
    \mathcal{B}_d\!\left(\mathbf{g}_d^{b}\right)
    \in \mathbb{R}^{K_{\mathrm{ang}} \times K_{\mathrm{de}} \times K_{\mathrm{do}} \times D}.
\end{equation}
To further reduce trainable parameters, we adopt layer sharing for the context gate modules, where the prior-guided gates are shared across all \(L\) layers.

\subsubsection{Per-Axis Gated Residual Connection}

Rather than a fixed additive skip connection, we adopt a learnable gated
residual connection that adaptively balances the modulated and original
features.  Specifically, the modulated output \(\boldsymbol{\mathcal{X}}_{\mathrm{gated}}^{(d)}\) is fused with the axis input \(\boldsymbol{\mathcal{X}}_{\mathrm{in}}^{(d)}\) as follows:
\begin{align}
    &\boldsymbol{\mathcal{M}}_{\mathrm{res}}^{(d)} = \sigma\!(
        \mathrm{Concat}\!\left(\boldsymbol{\mathcal{X}}_{\mathrm{in}}^{(d)}, \boldsymbol{\mathcal{X}}_{\mathrm{gated}}^{(d)}\right)
        \times_4\mathbf{W}_{\mathrm{res}}^{(d)} +\boldsymbol{\mathcal{B}}_{\mathrm{res}}^{(d)} ), \label{eq:res_gate} \\ 
        &\boldsymbol{\mathcal{X}}_{\mathrm{out}}^{(d)} = \boldsymbol{\mathcal{M}}_{\mathrm{res}}^{(d)} \odot \boldsymbol{\mathcal{X}}_{\mathrm{gated}}^{(d)} + \left( \mathbf{1} - \boldsymbol{\mathcal{M}}_{\mathrm{res}}^{(d)} \right) \odot \boldsymbol{\mathcal{X}}_{\mathrm{in}}^{(d)} ,
        \label{eq:res_out}
\end{align}
where \(\mathbf{W}_{\mathrm{res}}^{(d)} \in \mathbb{R}^{D \times 2D}\), 
\(\boldsymbol{\mathcal{B}}_{\mathrm{res}}^{(d)}\) is formed by expanding
\(\mathbf{b}_{\mathrm{res}}^{(d)} \in \mathbb{R}^{D}\) along the first three
modes, \(\sigma(\cdot)\) denotes the sigmoid function,
and \(\mathbf{1}\) denotes the all-ones tensor of compatible size. Since the gate is derived from both inputs
jointly, it can  retain the refined features where the context gate
provides reliable modulation and fall back to the original features otherwise.

The output \(\boldsymbol{\mathcal{X}}_{\mathrm{out}}^{(d)}\) of one axis becomes  input
to the next; completing all three axis-wise steps constitutes one axial-attention layer, and the output of the \(L\)-th layer is converted to
\(\widehat{\boldsymbol{\mathcal{G}}}\) via~\eqref{eq:output}.  The
extrapolated SFT-domain channel is then reconstructed via the
multilinear transformation in~\eqref{eq:tucker_extrap}.

\vspace{-2.5mm}
\subsection{SFT-ADD Dual-Domain Consistency Loss Design}
\label{subsec:loss}

Different from SFT-domain supervision adopted in prior extrapolation methods~\cite{3DDomainExtrapolationGaoFeiFei}, our network predicts the ADD-domain tensor $\widehat{\boldsymbol{\mathcal{G}}}$, from which the SFT-domain channel is obtained through~\eqref{eq:tucker_extrap}. Since SFT-domain NMSE alone does not explicitly enforce ADD-domain sparsity, it may lead to suboptimal extrapolation. We therefore adopt a composite loss that combines SFT-domain reconstruction with ADD-domain spectral consistency.

The primary loss measures the NMSE between the extrapolated and ground-truth SFT-domain channels over the prediction set \(\mathcal{T}_{\mathrm{pred}}\):
\begin{equation}\label{eq:loss_nmse}
    \mathcal{L}_{\mathrm{NMSE}}(\boldsymbol{\theta}) = \mathbb{E}\!\left[
        \big\|\widehat{\boldsymbol{\mathcal{H}}}_{\mathrm{extra}}
        - \boldsymbol{\mathcal{H}}_{\mathrm{pred}}\big\|_{\mathrm{F}}^2 \big/
        \big\|\boldsymbol{\mathcal{H}}_{\mathrm{pred}}\big\|_{\mathrm{F}}^2
    \right],
\end{equation}
where \(\boldsymbol{\mathcal{H}}_{\mathrm{pred}} \in \mathbb{C}^{N_{\mathrm{an}} \times N_{\mathrm{sc}} \times N_{\mathrm{pred}}}\) denotes the ground-truth SFT-domain channel tensor corresponding to the OFDM symbols in the prediction set \(\mathcal{T}_{\mathrm{pred}}\).

To guide the network to capture the ADD-domain sparsity structure, we introduce axis-wise spectral consistency terms. Since the proposed network adopts an axial-attention architecture that processes the ADD-domain tensor along each axis independently, formulating the auxiliary loss in an axis-wise manner aligns with the network structure and facilitates the learning of the sparsity pattern along each dimension. Specifically, for each axis \(d \in \{\mathrm{ang}, \mathrm{de}, \mathrm{do}\}\), the marginal power spectrum of  \(\boldsymbol{\mathcal{X}} \in \mathbb{C}^{K_{\mathrm{ang}} \times K_{\mathrm{de}} \times K_{\mathrm{do}}}\) is  defined as
$
    p_{k_d}^{(d)}\big(\boldsymbol{\mathcal{X}}\big)
    = \sum_{\mathbf{k}_{\bar{d}}}
    |[\boldsymbol{\mathcal{X}}]_{k_{\mathrm{ang}},k_{\mathrm{de}},k_{\mathrm{do}}}|^2
$, where \(\mathbf{k}_{\bar{d}}\) collects indices on the remaining two axes. The corresponding spectral vector is \(\mathbf{p}^{(d)}(\boldsymbol{\mathcal{X}}) = [p_1^{(d)}(\boldsymbol{\mathcal{X}}), \ldots, p_{K_d}^{(d)}(\boldsymbol{\mathcal{X}})]^{T} \in \mathbb{R}_{+}^{K_d}\). The spectral consistency along axis \(d\) is quantified by the cosine similarity
\begin{equation}\label{eq:spec_similarity}
    \rho^{(d)}(\widehat{\boldsymbol{\mathcal{G}}}, \boldsymbol{\mathcal{G}})
    = \frac{
        \langle \mathbf{p}^{(d)}(\widehat{\boldsymbol{\mathcal{G}}}),\,
        \mathbf{p}^{(d)}(\boldsymbol{\mathcal{G}}) \rangle
    }{
        \| \mathbf{p}^{(d)}(\widehat{\boldsymbol{\mathcal{G}}}) \|_2\,
        \| \mathbf{p}^{(d)}(\boldsymbol{\mathcal{G}}) \|_2
    },
\end{equation}
and the auxiliary spectral loss on axis \(d\) is \(\mathcal{L}_{\mathrm{spec}}^{(d)}(\boldsymbol{\theta}) = 1 - \rho^{(d)}(\widehat{\boldsymbol{\mathcal{G}}}, \boldsymbol{\mathcal{G}})\). The overall training objective is
\begin{equation}\label{eq:loss_total}
    \mathcal{L}(\boldsymbol{\theta}) = w_{\mathrm{main}}\,
    \mathcal{L}_{\mathrm{NMSE}} + \sum_{d \in \{\mathrm{ang},\, \mathrm{de},\, \mathrm{do}\}} w_d\,
    \mathcal{L}_{\mathrm{spec}}^{(d)},
\end{equation}
where auxiliary weights \(w_d\) are gradually reduced via exponential decay during training. This composite loss guides the network toward sparse ADD-domain representations in the early phase, while progressively shifting focus to the primary NMSE objective to refine end-to-end extrapolation accuracy.

\vspace{-2mm}
\subsection{Mixed-Configuration Training and  Complexity Analysis}

\subsubsection{Mixed-Configuration Training}
The ADD-domain representation decouples the network input dimensionality from the SFT-domain decimation factors, since its tensor dimensions are specified only by the predefined angle, delay, and Doppler grids. This decoupling allows the proposed network to accommodate different decimation configurations under a unified architecture, and consequently enables mixed-configuration training. Specifically, the decimation factors and signal-to-noise ratios (SNRs) are randomized across samples within each mini-batch. Such configuration-invariant dimensionality is not available in prior SFT-domain designs~\cite{3DDomainExtrapolationGaoFeiFei}, where the tensor dimensions vary with the decimation factors and thus require configuration-specific training. As a result, the proposed framework reduces retraining and deployment overhead while achieving robust single-model performance across different decimation configurations and noise levels.
Moreover, the network learns a prior-conditioned de-aliasing mapping from channel samples generated under 3GPP-defined outdoor propagation scenarios and their associated support priors. During deployment, the support-prior input can be derived from the specific propagation conditions observed at the target site, enabling site-aware adaptation without site-wise retraining.

\subsubsection{Complexity Analysis}

The computational complexity of the proposed framework is dominated by the multilinear transformations and the TANN inference. The multilinear transformations, governed by mode-wise tensor-matrix multiplications in \eqref{eq:LS_init} and \eqref{eq:tucker_extrap}, incur \(\mathcal{O}(K_{\mathrm{ang}}K_{\mathrm{de}}K_{\mathrm{do}}\bar{M})\) and \(\mathcal{O}(K_{\mathrm{ang}}K_{\mathrm{de}}K_{\mathrm{do}}\bar{N})\) operations, where \(\bar{M}\triangleq M_{\mathrm{an}}+M_{\mathrm{sc}}+M_{\mathrm{sym}}\) and \(\bar{N}\triangleq N_{\mathrm{an}}+N_{\mathrm{sc}}+N_{\mathrm{pred}}\), whereas the TANN is governed by its \(L\)-layer axial-attention backbone with cost \(\mathcal{O}(LK_{\mathrm{ang}}K_{\mathrm{de}}K_{\mathrm{do}}(K_{\mathrm{ang}}+K_{\mathrm{de}}+K_{\mathrm{do}})D)\), while the multi-scale CNN context-gating, gated-residual, embedding, and projection layers are point-wise and thus negligible. Retaining the dominant terms, the overall complexity of the proposed framework is
$
\mathcal{O}\!\left(K_{\mathrm{ang}}K_{\mathrm{de}}K_{\mathrm{do}}\left(\bar{M}+\bar{N}+L(K_{\mathrm{ang}}+K_{\mathrm{de}}+K_{\mathrm{do}})D\right)\right).
$

\section{Simulation Results}
\label{sec_simulation}

\subsection{Simulation Configuration}

\subsubsection{Scenario Setting}

In this section, we present simulation results to evaluate the performance of the proposed scheme. To generate the channels for simulations, we employ the QuaDRiGa channel simulator~\cite{Cite2014Quadriga}, which is capable of producing time-varying massive MIMO-OFDM channels consistent with the 3GPP 38.901 specifications~\cite{3GPP_38901} and has been validated in various field trials.  A total of \(30{,}000\) channel samples are generated under mixed SNRs and decimation factors, and then partitioned into \(24{,}000\), \(3{,}000\), and \(3{,}000\) samples for training, validation, and testing, respectively. Unless otherwise specified, the simulation configurations follow the system model in Section~\ref{sec_system}. The basic simulation parameters and TANN hyper-parameters are jointly summarized in Table~\ref{tab:simulation_settings}.

\begin{table}[!t]
    \small
    \centering
    \caption{Summary of Simulation Settings}
    \label{tab:simulation_settings}
    \scalebox{0.9}{
    \begin{tabular}{c|c}
        \hline
        Parameter & Values \\ 
        \hline

        \multicolumn{2}{c}{Basic System Parameters} \\
        \hline
        Carrier frequency $f_{\mathrm{c}}$ (GHz) & 10, 15 (training), 20 \\ 
        Subcarrier spacing $\Delta f$ & 60 kHz  \\ 
        Number of subcarriers $N_{\mathrm{sc}}$ & 64 \\ 
        Number of pilot interval symbols $N_{\mathrm{t}}$ & 14 \\ 
        Number of pilot symbols $M_{\mathrm{sym}}$ & 10 \\ 
        Number of prediction symbols $N_{\mathrm{pred}}$ & 14 \\ 
        Number of antennas $N_{\mathrm{an}}$ & 32 \\ 
        Frequency decimation factor $N_{\mathrm{f}}$ & 2, 4, 8, 16 \\ 
        Spatial decimation factor $N_{\mathrm{s}}$ & 1, 2, 4 \\ 
        MT velocity $v$ (km/h) & 3, 30, 60 (training), 90 \\ 
        Doppler oversampling factor $S_{\nu}$ & 2 \\
        Channel model & 38.901 RMa NLoS \\
        \hline

        \multicolumn{2}{c}{TANN Hyper-Parameters} \\
        \hline
        Initial learning rate & $1 \times 10^{-3}$ \\ 
        Batch size & 16 \\ 
        Multi-scale convolution kernel sizes $\kappa^{(j)}$ & 1, 3 \\ 
        Embedding dimension $D$ & 16 \\
        Number of attention heads $N_h$ & 4 \\
        Number of axial-attention layers $L$ & 4 \\
        \hline
    \end{tabular}
    }
    \vspace{-6mm}
\end{table}

\subsubsection{Benchmarks}
To demonstrate the superiority of the proposed support-prior-assisted TANN
(SPA-TANN), we compare it with several representative channel extrapolation benchmarks. Guided by the aliasing analysis in Section~\ref{sec_problem},
SPA-TANN jointly integrates the multi-domain support priors into the network
inference process for learned de-aliasing, whereas the prior-assisted (PA)-prefixed benchmarks
employ angle-domain circular shifting to resolve the aliasing ambiguity. The considered baselines are described as follows:
\begin{itemize}
    \item \textbf{PA-OMP~\cite{CiteOMPBaseLine}}: The orthogonal matching pursuit (OMP) algorithm estimates the ADD-domain channel from \(\boldsymbol{\mathcal{Y}}\) via greedy sparse recovery over a predefined dictionary.

\item \textbf{PA-VSD~\cite{CiteVSDTensorDecompositionMethod}}: The Vandermonde structured decomposition (VSD) algorithm estimates the channel parameters from \(\boldsymbol{\mathcal{Y}}\) via tensor decomposition by exploiting the Vandermonde structure of the factor matrices in \eqref{eq:tucker}.
	
\item \textbf{KDD-SFTCEN~\cite{3DDomainExtrapolationGaoFeiFei}}:
The  knowledge-and-data-driven SFT channel extrapolation network (KDD-SFTCEN) performs channel extrapolation in the SFT domains without using prior information.

	\item \textbf{PA-KDD-SFTCEN~\cite{3DDomainExtrapolationGaoFeiFei}}:
    PA-KDD-SFTCEN is the variant of KDD-SFTCEN that performs channel extrapolation in the SFT domains after PA-based preprocessing.
\end{itemize}

\vspace{-3mm}
\subsection{Performance Evaluation}
\subsubsection{NMSE Versus SNR} 
Fig.~\ref{fig:sim_NMSE_SNR} compares the NMSE performance of all schemes versus SNR under two pilot-decimation configurations.
In Fig.~\ref{fig:sim_NMSE_SNR_1s2f}, with $(N_{\mathrm{s}}, N_{\mathrm{f}})=(1,2)$, there is no spatial decimation and  therefore no angle-domain aliasing arises. In this regime, both KDD-SFTCEN and PA-KDD-SFTCEN achieve comparable performance and significantly outperform the model-based PA-OMP and PA-VSD benchmarks, confirming that data-driven methods can effectively exploit the SFT correlations in the absence of angular ambiguity.  
SPA-TANN further improves upon PA-KDD-SFTCEN, indicating that its ADD-domain formulation provides more effective multi-domain feature modeling.
The performance hierarchy changes markedly in Fig.~\ref{fig:sim_NMSE_SNR_2s4f} under $(N_{\mathrm{s}}, N_{\mathrm{f}})=(2,4)$, where two-fold spatial decimation introduces severe angle-domain aliasing. Without PA-based preprocessing, KDD-SFTCEN fails to resolve the
aliasing ambiguity and degrades below PA-VSD, which benefits from
PA-based de-aliasing. This reveals that angular aliasing induces a
fundamental ambiguity that purely data-driven learning cannot overcome.
Incorporating the PA-based preprocessing (PA-KDD-SFTCEN) restores
advantage over model-based methods, while the proposed SPA-TANN further extends
this margin through its context-gated prior injection for ADD-domain de-aliasing.

\begin{figure*}[!t]
    \centering
    \begin{minipage}[t]{0.492\textwidth}
        % \vspace{0pt}
        \centering

        \subfloat[$N_{\mathrm{s}}=1$, $N_{\mathrm{f}}=2$]{
            \includegraphics[width=0.47\linewidth]{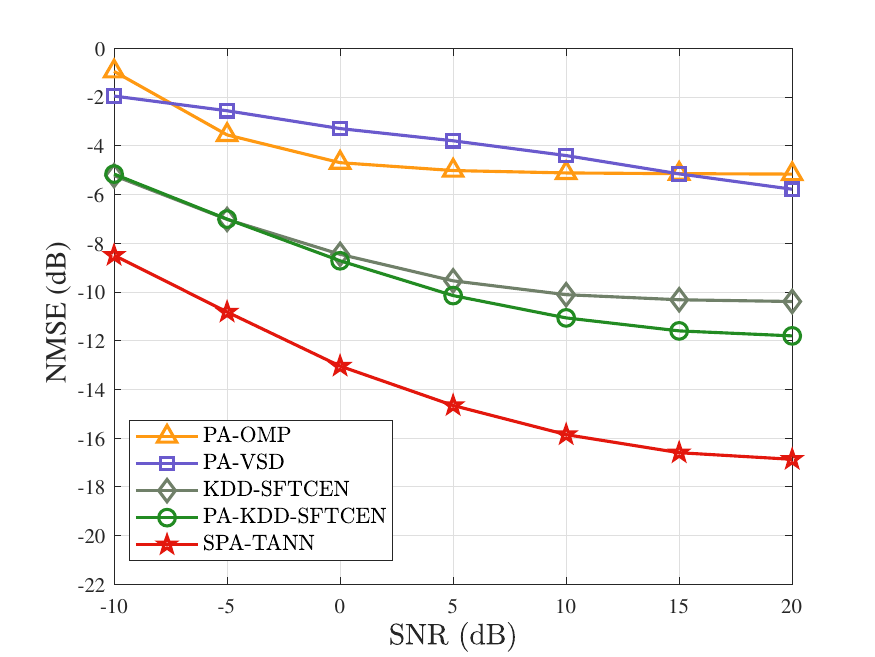}
            \label{fig:sim_NMSE_SNR_1s2f}
        }%
        % \hspace{0.001\linewidth}
        \subfloat[$N_{\mathrm{s}}=2$, $N_{\mathrm{f}}=4$]{
            \includegraphics[width=0.47\linewidth]{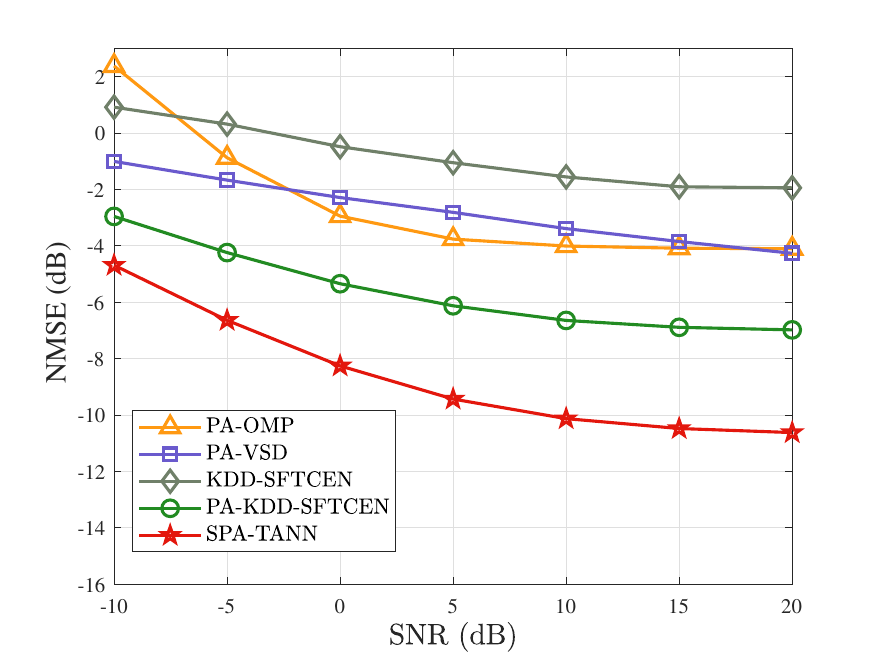}
            \label{fig:sim_NMSE_SNR_2s4f}
        }

        \caption{Channel extrapolation performance versus SNR at $v=60$~km/h and $f_{\mathrm{c}}=15$~GHz.}
        \label{fig:sim_NMSE_SNR}
    \end{minipage}%
    \hspace{0.015\textwidth}%
    \begin{minipage}[t]{0.492\textwidth}
        \vspace{0pt}
        \centering

        \setcounter{subfigure}{0}

        \subfloat[Frequency domain, $N_{\mathrm{s}}=1$]{
            \includegraphics[width=0.47\linewidth]{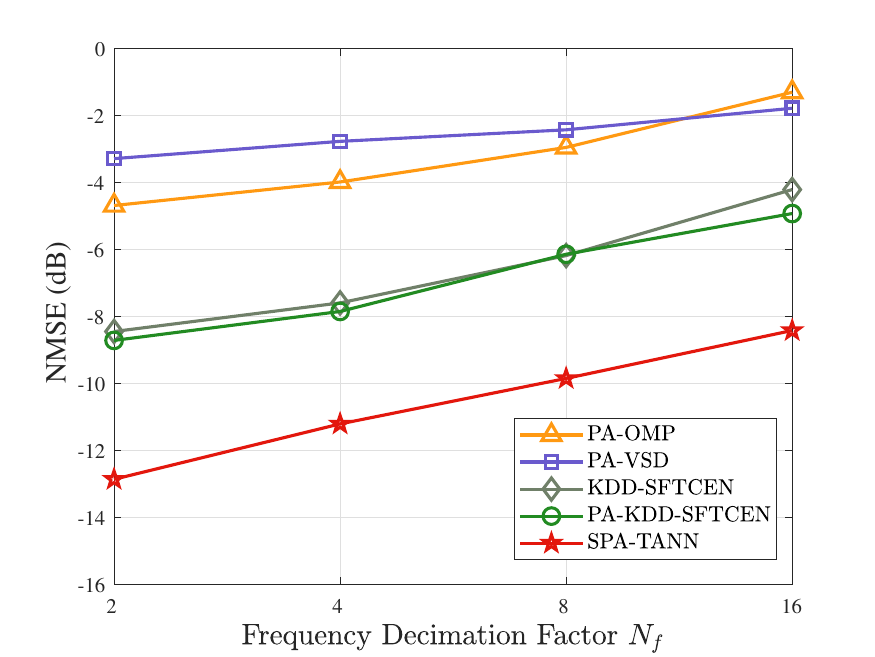}
            \label{Fig:sim_NMSE_DifferentNf_SNR0dB_60kmh}
        }%
        \hfill
        \subfloat[Spatial domain, $N_{\mathrm{f}}=2$]{
            \includegraphics[width=0.47\linewidth]{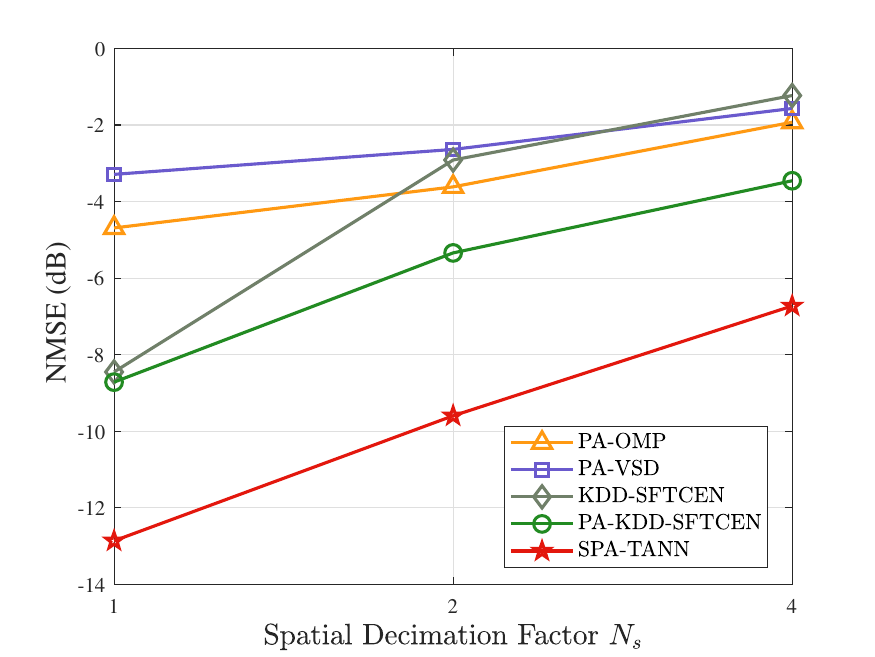}
            \label{Fig:sim_NMSE_DifferentNs_SNR0dB_60kmh}
        }

        \caption{Channel extrapolation performance versus decimation factors at $\text{SNR}=0$~dB, $v=60$~km/h and $f_{\mathrm{c}}=15$~GHz. }
        \label{fig:sim_NMSE_DecimationFactor}
    \end{minipage}%
    \vspace{-7mm}
\end{figure*}

\begin{figure*}[htbp]
    \centering
    \begin{minipage}[t]{0.492\textwidth}
        \vspace{0pt}
        \centering

        \subfloat[$N_{\mathrm{s}}=1$, $N_{\mathrm{f}}=2$]{
            \includegraphics[width=0.47\linewidth]{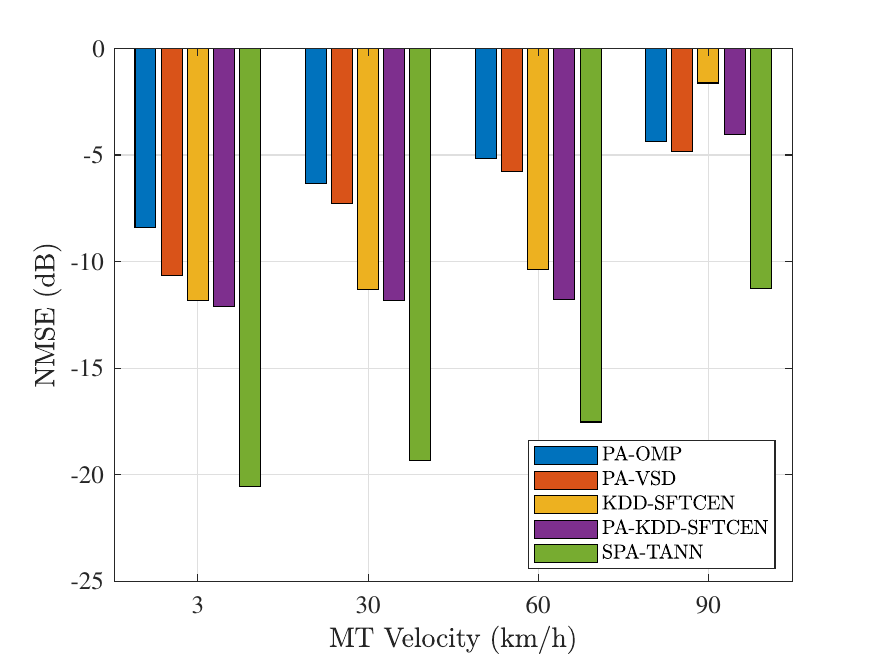}
            \label{fig:sim_NMSE_DifferentVelocity_1s2f}
        }%
        \hfill
        \subfloat[$N_{\mathrm{s}}=2$, $N_{\mathrm{f}}=4$]{
            \includegraphics[width=0.47\linewidth]{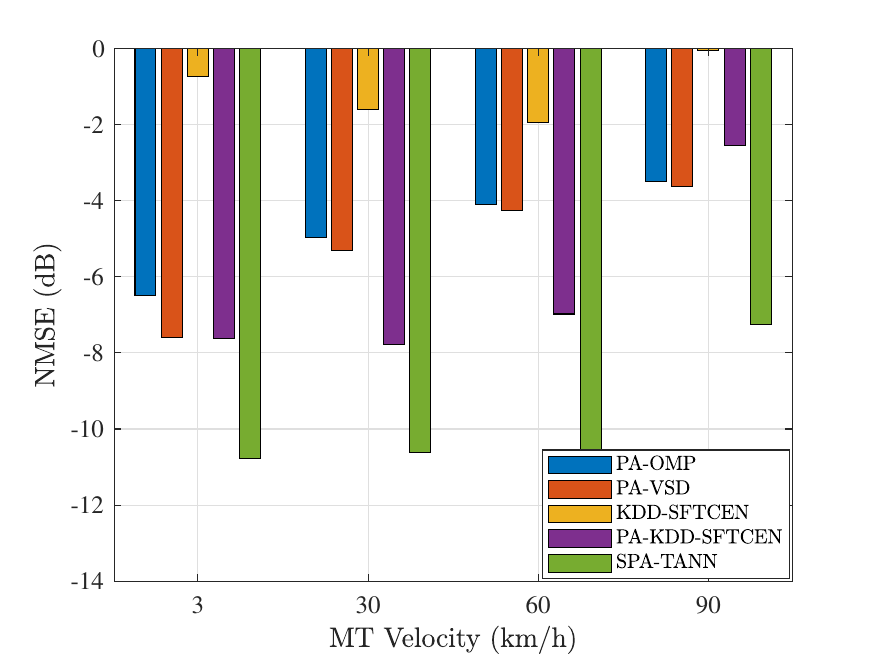}
            \label{fig:sim_NMSE_DifferentVelocity_2s4f}
        }

        \caption{Channel extrapolation performance versus MT velocity $v$ at $\text{SNR}=20$~dB and $f_{\mathrm{c}}=15$~GHz.}
        \label{fig:sim_NMSE_DifferentVelocity}
    \end{minipage}%
     \hspace{0.015\textwidth}%
    \begin{minipage}[t]{0.492\textwidth}
        \vspace{0pt}
        \centering

        \setcounter{subfigure}{0}

        \subfloat[$N_{\mathrm{s}}=1$, $N_{\mathrm{f}}=2$]{
            \includegraphics[width=0.47\linewidth]{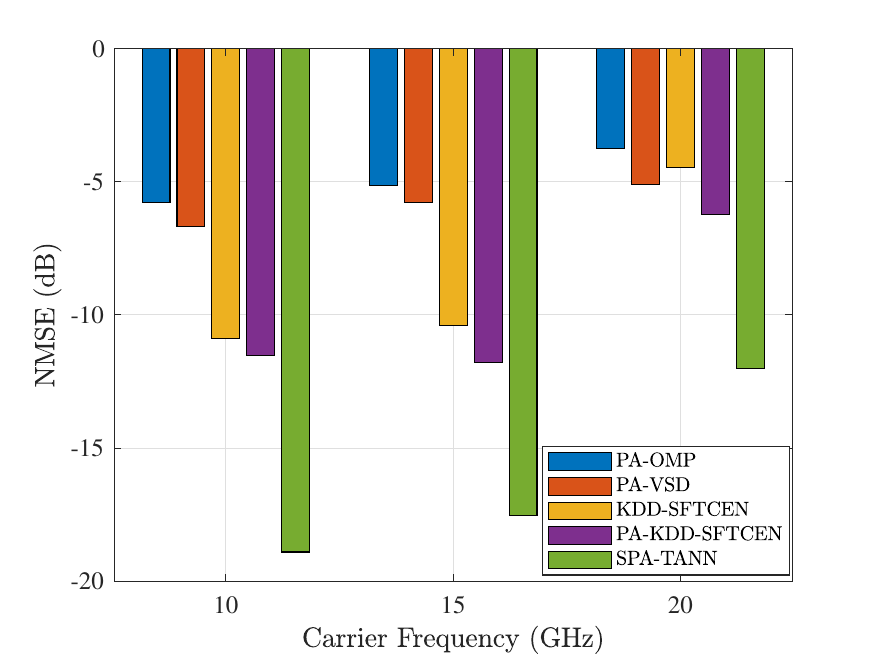}
            \label{fig:sim_NMSE_Differentfc_1s2f}
        }%
        \hfill
        \subfloat[$N_{\mathrm{s}}=2$, $N_{\mathrm{f}}=4$]{
            \includegraphics[width=0.47\linewidth]{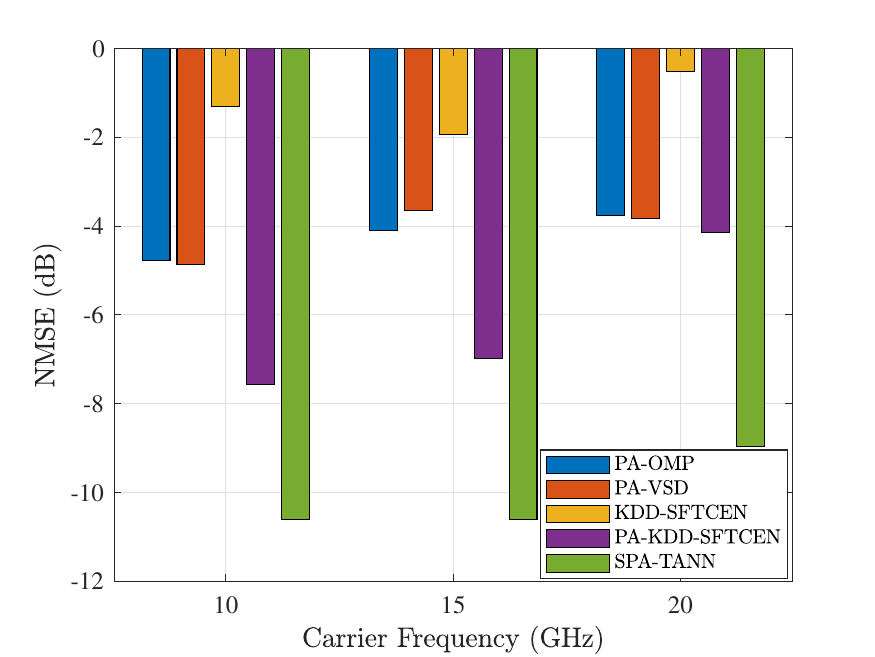}
            \label{fig:sim_NMSE_Differentfc_2s4f}
        }

        \caption{Channel extrapolation performance versus carrier frequency $f_{\mathrm{c}}$ at $\text{SNR}=20$~dB and $v=60$~km/h.}
        \label{fig:sim_NMSE_Differentfc}
    \end{minipage}%
    \vspace{-6mm}
\end{figure*}

\subsubsection{NMSE Versus Pilot-Decimation Factors} 
Fig.~\ref{fig:sim_NMSE_DecimationFactor} presents the NMSE with respect to
the frequency and spatial decimation factors \(N_{\mathrm{f}}\) and \(N_{\mathrm{s}}\),
respectively.
In Fig.~\ref{Fig:sim_NMSE_DifferentNf_SNR0dB_60kmh} with \(N_{\mathrm{s}}=1\),
increasing \(N_{\mathrm{f}}\) reduces the pilot density and intensifies delay-domain
aliasing, causing NMSE degradation for all methods. The proposed SPA-TANN
consistently achieves the lowest NMSE over the entire \(N_{\mathrm{f}}\) range, and
its performance at \(N_{\mathrm{f}}=16\) is comparable to that of KDD-SFTCEN at
\(N_{\mathrm{f}}=2\), suggesting up to an \(8\)-fold frequency-domain pilot overhead
reduction with comparable extrapolation accuracy. In Fig.~\ref{Fig:sim_NMSE_DifferentNs_SNR0dB_60kmh} with \(N_{\mathrm{f}}=2\),
increasing \(N_{\mathrm{s}}\) aggravates angle-domain aliasing and degrades all
schemes. The performance gap between KDD-SFTCEN and PA-KDD-SFTCEN indicates that
angle-domain priors are essential for alleviating the ambiguity introduced by
spatial pilot decimation.
 The proposed SPA-TANN achieves the best NMSE throughout, with
its performance at \(N_{\mathrm{s}}=4\) comparable to KDD-SFTCEN at \(N_{\mathrm{s}}=1\),
indicating up to a \(4\)-fold spatial-domain pilot reduction. These
results demonstrate that the proposed method enables aggressive pilot
decimation in both frequency and spatial domains while limiting the
associated performance degradation.

\subsubsection{NMSE Versus MT Velocity} 
We further examine the robustness of SPA-TANN to the MT velocity. The NMSE performance versus velocity is shown in Fig.~\ref{fig:sim_NMSE_DifferentVelocity} under two decimation configurations, where all data-driven methods are trained at $v=60$~km/h. 
As the MT velocity increases, the NMSE generally tends to increase due to the stronger temporal variation of the channel. For testing velocities below the training velocity, i.e., $v=3$ and $30$~km/h, the deep learning-based methods remain effective overall, since slower temporal channel variation usually corresponds to a less challenging extrapolation task. When the velocity increases to $v=90$~km/h beyond the training condition, the performance degradation becomes more pronounced for the baselines. In contrast, the proposed SPA-TANN consistently achieves the lowest NMSE  and exhibits a smaller performance degradation under velocity mismatch, demonstrating strong robustness enabled by the Doppler-domain modeling in the ADD-domain formulation.

\subsubsection{NMSE Versus Carrier Frequency}
Fig.~\ref{fig:sim_NMSE_Differentfc} evaluates all considered schemes at three carrier frequencies within the FR3 band, with all data-driven methods trained exclusively at $f_{\mathrm{c}}=15$~GHz. Since the Doppler frequency scales linearly with the carrier frequency, higher $f_{\mathrm{c}}$ leads to faster temporal channel variation, and accordingly increases the NMSE of all schemes in both cases. The proposed SPA-TANN consistently achieves the lowest NMSE, demonstrating its effectiveness throughout the FR3 band. Notably, the performance degradation of KDD-SFTCEN is more severe from $f_{\mathrm{c}}=15$~GHz to $20$~GHz than from $15$~GHz to $10$~GHz. This asymmetry arises because $f_{\mathrm{c}}=10$~GHz entails a moderate Doppler spread whose temporal characteristics remain well represented by the training data at $15$~GHz, whereas $f_{\mathrm{c}}=20$~GHz introduces faster temporal variation beyond the training distribution. In contrast, SPA-TANN suffers less degradation from $15$~GHz to $20$~GHz, benefiting from its physically grounded Doppler-domain representation that generalizes temporal channel behavior across carrier frequencies.

\subsubsection{NMSE for Different Scenarios}
\begin{table*}[!t]
    \centering
    \caption{NMSE (dB) of Channel Extrapolation Versus Scenarios at $\text{SNR}=20$~dB and $v=60$~km/h.}

    \label{tab:scenario}
    \scalebox{0.95}{\begin{tabular}{c c c c c c c}
    \toprule
    \multirow{2}{*}{\textbf{Pilot-Decimation Setting}} &
    \multirow{2}{*}{\textbf{Scenario}} &
    \multicolumn{5}{c}{\textbf{Algorithm}} \\
    \cmidrule(lr){3-7}
    & & SPA-TANN & PA-OMP & PA-VSD & KDD-SFTCEN & PA-KDD-SFTCEN \\
    \midrule
    \multirow{3}{*}{$(N_{\mathrm{s}},N_{\mathrm{f}})=(1,2)$}
    &  UMa NLoS & $\mathbf{-16.87}$ & $-3.87$ & $-3.71$ & $-8.56$ & $-9.77$ \\
    &  UMi NLoS & $\mathbf{-17.26}$ & $-2.91$ & $-1.84$ & $-10.22$ & $-11.50$ \\
    & RMa NLoS & $\mathbf{-18.21}$ & $-5.16$ & $-5.78$ & $-10.73$ & $-11.85$ \\
    \midrule
    \multirow{3}{*}{$(N_{\mathrm{s}},N_{\mathrm{f}})=(2,4)$}
    & UMa NLoS & $\mathbf{-9.05}$ & $-2.17$ & $-1.57$ & $-2.73$ & $-4.19$ \\
    & UMi NLoS & $\mathbf{-8.52}$ & $-1.95$ & $-0.75$ & $-3.03$ & $-5.19$ \\
    & RMa NLoS & $\mathbf{-13.29}$ & $-4.10$ & $-4.26$ & $-2.24$ & $-7.57$ \\
    \bottomrule
    \end{tabular}
    }
    \vspace{-8mm}
\end{table*}

\begin{figure}[!t] \centering 
    \includegraphics[width=0.3\textwidth]{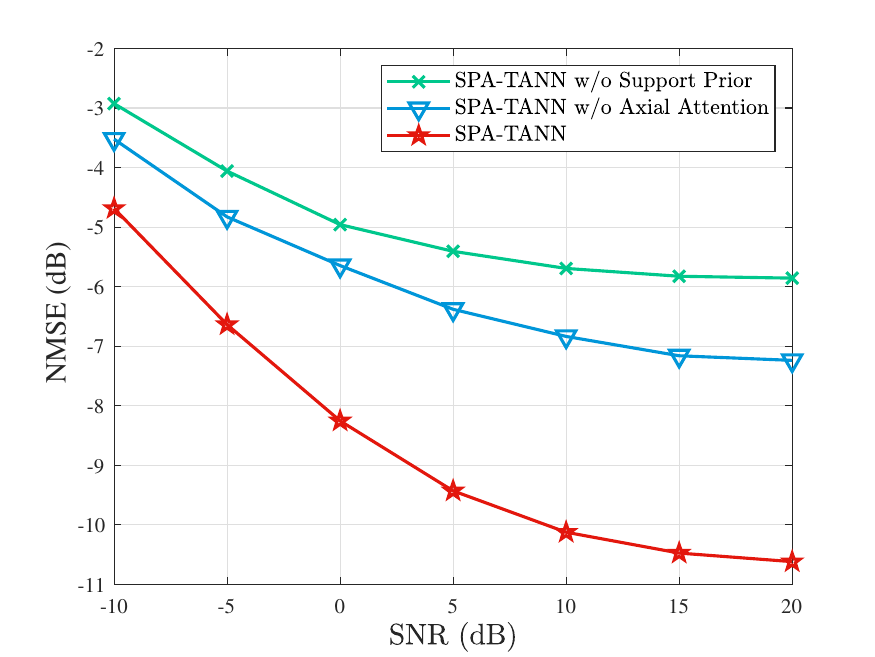} 
\caption{NMSE versus SNR for SPA-TANN and its ablated variants.}
    \label{fig:ablation_nmse_snr_2s4f} \vspace{-6mm} 
\end{figure}
\begin{figure*}[h]
    \centering

    \subfloat[Alias-Group Input]{
        \includegraphics[width=0.23\linewidth]{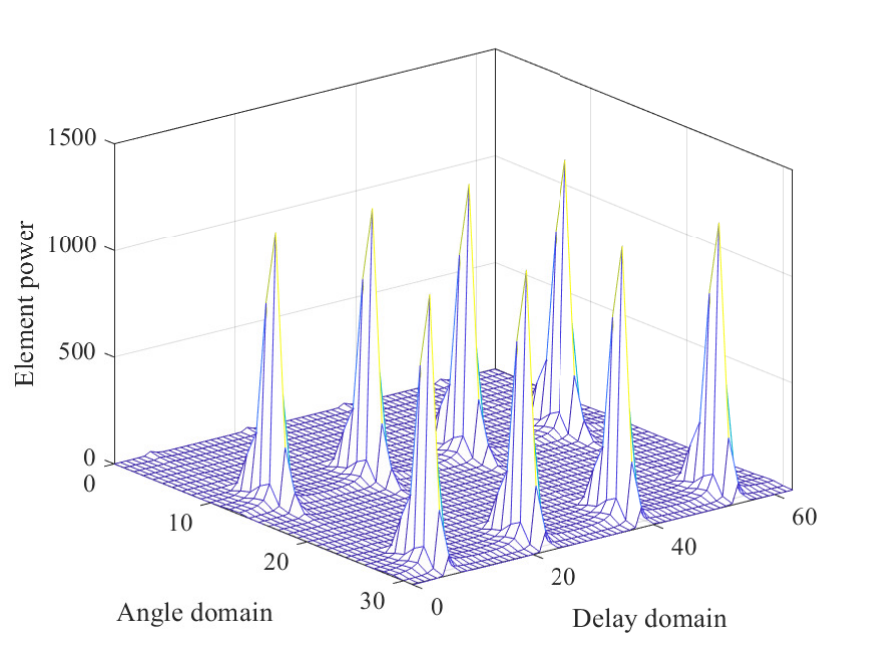}
        \label{fig:sim_alias_group}
    }
    \hfill
    \subfloat[Ground Truth]{
        \includegraphics[width=0.23\linewidth]{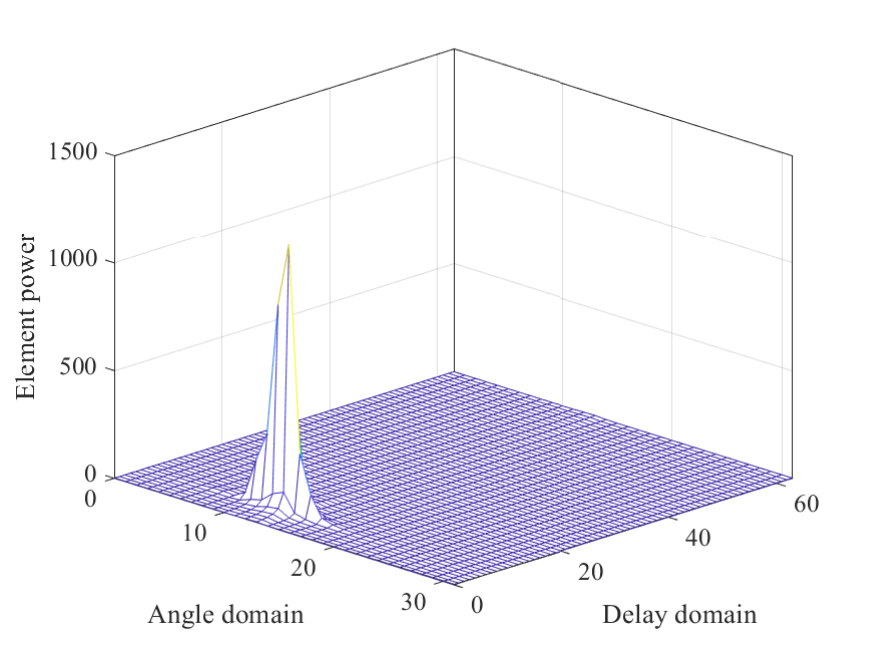}
        \label{fig:sim_ground_truth}
    }
    \hfill
    \subfloat[SPA-TANN Output]{
        \includegraphics[width=0.23\linewidth]{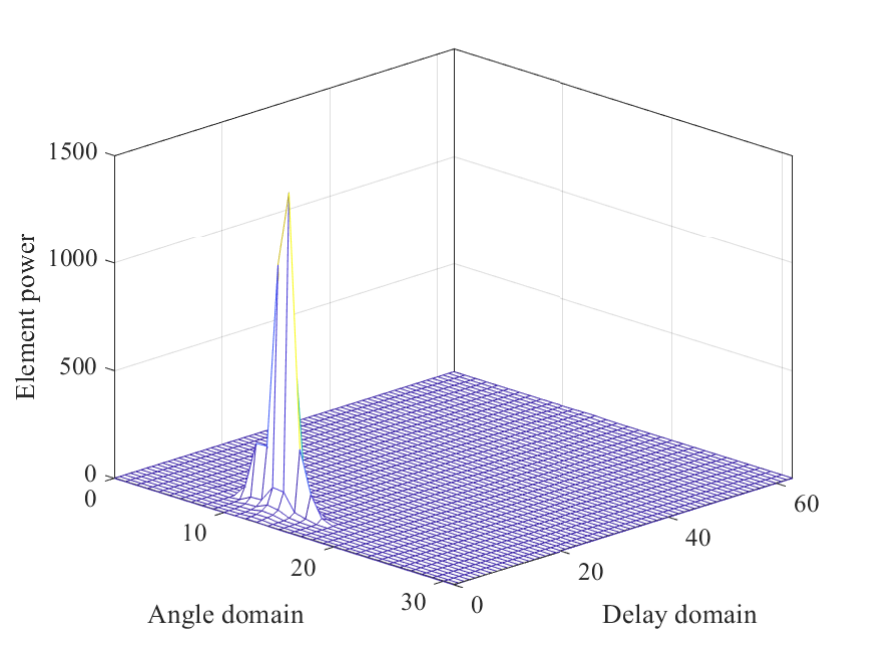}
        \label{fig:sim_TANN_output}
    }
    \hfill
    \subfloat[w/o Support Prior]{
        \includegraphics[width=0.23\linewidth]{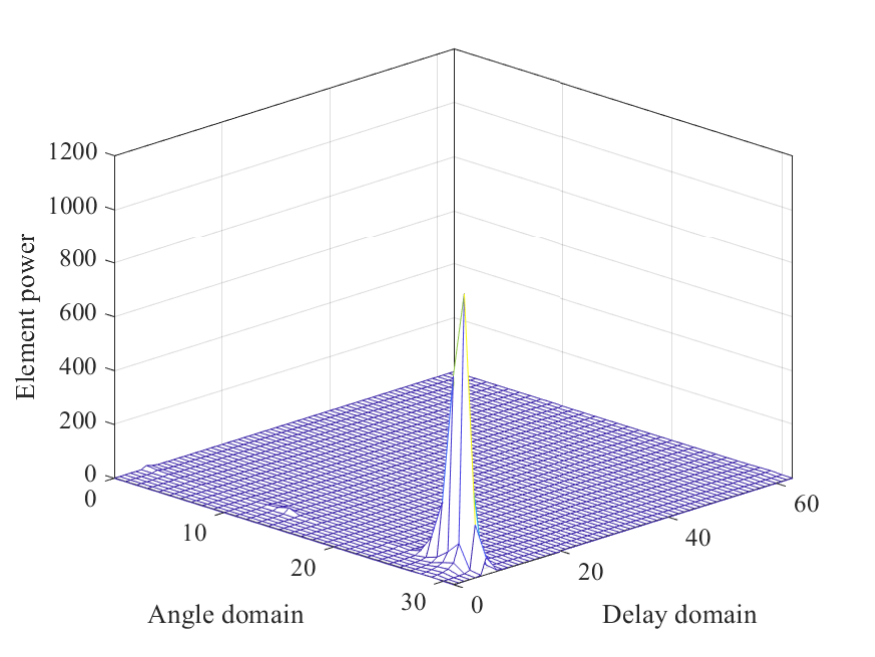}
        \label{fig:sim_withoutPrior}
    }
    \captionsetup{justification=raggedright,singlelinecheck=false}
    \caption{Angle-delay power spectrum visualization under
    \((N_{\mathrm{s}},N_{\mathrm{f}}) = (2,4)\), \(\text{SNR}=20\)~dB, \(v=60\)~km/h,
    and \(f_{\mathrm{c}}=15\)~GHz.}
    \label{fig:add_power_spectrum}
    \vspace{-7mm}
\end{figure*}

To demonstrate the robustness of the proposed method, we consider three
typical outdoor  scenarios defined in the 3GPP
specifications~\cite{3GPP_38901}: UMa NLoS, urban micro
(UMi) NLoS, and rural macro (RMa) NLoS. UMa NLoS represents macro-cellular
urban propagation with moderate multipath, UMi NLoS exhibits richer local
scattering with larger delay and angle spreads, and RMa NLoS is
characterized by sparser scattering and smaller delay-angle spreads.
All deep learning-based methods are trained on a mixed-scenario dataset containing the same number of samples as single-scenario training, and their NMSE performance under two pilot-decimation settings is summarized in Table~\ref{tab:scenario}.
For $(N_{\mathrm{s}},N_{\mathrm{f}})=(1,2)$, the performance variation across propagation scenarios remains limited, while SPA-TANN consistently outperforms all baselines. 
As the decimation factors increase to $(N_{\mathrm{s}},N_{\mathrm{f}})=(2,4)$, the scenario-dependent performance gap becomes more pronounced. The sparse channel structure in RMa NLoS facilitates extrapolation, whereas the richer multipath and larger delay-angle spreads in urban NLoS scenarios exacerbate aliasing ambiguity. 
Across both settings, PA-KDD-SFTCEN improves upon KDD-SFTCEN through PA-based preprocessing, while SPA-TANN achieves the lowest NMSE in all evaluated scenarios, demonstrating robust generalization under mixed-scenario training.

\vspace{-4mm}
\subsection{Ablation Study}
To validate the effectiveness of the key components of TANN, 
we conduct ablation experiments under the challenging decimation configuration 
$(N_{\mathrm{s}}, N_{\mathrm{f}})=(2,4)$ at $v=60$~km/h and 
$f_{\mathrm{c}}=15$~GHz, comparing the full model with two variants:
\begin{itemize}
	\item \textbf{SPA-TANN w/o Support Prior}: This variant removes the
	multi-scale CNN context gating module entirely, so that no support prior is injected and the network relies
	solely on the axial-attention mechanism to resolve aliasing
	from the LS-initialized ADD-domain tensor.
	\item \textbf{SPA-TANN w/o Axial-Attention}: This variant replaces
	the axis-wise Transformer-based attention blocks with 3D
	convolutional layers,
	evaluating the contribution of the global dependency modeling and feature representation capacity
	provided by axial-attention.
\end{itemize}
As shown in Fig.~\ref{fig:ablation_nmse_snr_2s4f}, both ablated
variants exhibit substantial NMSE degradation relative to the
 SPA-TANN across the entire SNR range, confirming that each
component contributes meaningfully to the overall performance.  The variant
without the support prior exhibits the most pronounced loss,
which confirms the importance of support prior-assisted context modeling
in mitigating angle-domain ambiguity caused by spatial
decimation. Replacing axial-attention with 3D CNN also degrades
the reconstruction accuracy, indicating that local convolutional
operations are less effective in capturing the long-range
aliasing dependencies across multiple domains. The
SPA-TANN, which integrates both components, consistently achieves
the lowest NMSE, verifying the complementary and indispensable
roles of the axial-attention architecture and the
SPA context gating in the proposed design.

To provide an intuitive understanding of how the support prior assists in resolving multi-domain aliasing, we visualize the angle-delay power spectrum of one representative test sample in Fig.~\ref{fig:add_power_spectrum}. Owing to the limited delay spread of wireless channels, the dominant components are confined to a compact low-delay support; this inherent concentration alleviates the ambiguity from frequency-domain decimation and renders the principal delay structure recoverable. The angle domain, lacking a comparable concentration property, is far more susceptible to spatial-decimation-induced ambiguity. Absent support-prior guidance, the ablated network may resolve the aliased angular components incorrectly, yielding noticeable deviations of the angular peaks from the ground truth. 
This limitation is effectively overcome by the SPA de-aliasing module in SPA-TANN, which suppresses the aliasing ambiguity and recovers a spectrum closely matching the ground truth.

\section{Conclusion}
\label{sec_Conclusion}
This paper investigated  multi-domain channel extrapolation for upper mid-band massive MIMO-OFDM systems. Through the Tucker-based ADD-domain analysis, we found that uniform  pilot decimation across the SFT domains   induces asymmetric aliasing across the angle, delay, and Doppler domains, which fundamentally limits channel extrapolation and necessitates support-prior guidance to resolve the ADD-domain aliasing. Guided by this finding, SPA-TANN combines axial attention and lightweight multi-scale gating for support-prior-assisted ADD-domain de-aliasing, and is trained over different pilot decimation factors to achieve cross-configuration generalization without retraining.  
Simulation results show that the proposed framework consistently achieves the lowest NMSE across diverse velocities, carrier frequencies, and scenarios, while reducing the frequency- and spatial-domain pilot overhead by up to eight-fold and four-fold, respectively, demonstrating its potential for efficient CSI acquisition in FR3 massive MIMO systems with limited pilot overhead.

\normalem

\bibliographystyle{IEEEtran}
\bibliography{SFTChannelExtrapolation}

@ARTICLE{3DDomainExtrapolationGaoFeiFei,
  author={Zhou, Binggui and Yang, Xi and Ma, Shaodan and Gao, Feifei and Yang, Guanghua},
  journal={IEEE Trans. Wireless Commun.}, 
  title={Low-Overhead Channel Estimation via {3D} Extrapolation for {TDD} {mmWave} Massive {MIMO} Systems Under High-Mobility Scenarios}, 
  year={2025},
  volume={24},
  number={4},
  pages={2797-2813},
  doi={10.1109/TWC.2024.3524911}}

@article{GaoFeiFei2025AntennaSelection,
  title={Deep learning-based channel extrapolation for {5G} advanced massive {MIMO}: Hardware prototype and experimental evaluation},
  author={Li, Hongyao and Wang, Mingjin and Han, Runyu and Wang, Ning and Wu, Huihui and Gu, Yuantao and Yuan, Wanmai and Gao, Feifei},
  journal={IEEE Trans. Wireless Commun.}, 
  volume={24},
  number={3},
  pages={1756--1771},
  year={2025},
  publisher={IEEE}
}

@article{11049893,
  author    = {Wang, Yafei and Hou, Hongwei and Yi, Xinping and Wang, Wenjin and Jin, Shi},
  title     = {Toward Unified {AI} Models for {MU-MIMO} Communications: A Tensor Equivariance Framework},
  journal   = {IEEE Trans. Wireless Commun.},
  year      = {2025},
  volume    = {24},
  number    = {12},
  pages     = {10517--10533},
  month     = {Dec.},
  doi       = {10.1109/TWC.2025.3580321},
  publisher = {IEEE},
  issn      = {1536-1276}
}

@ARTICLE{HouTuckerBasedChannelPrediction,
  author={Hou, Hongwei and Wang, Yafei and Zhu, Yiming and Yi, Xinping and Wang, Wenjin and Slock, Dirk T. M. and Jin, Shi},
	journal={IEEE Trans. Wireless Commun.}, 
  title={A Tensor-Structured Approach to Dynamic Channel Prediction for Massive {MIMO} Systems With Temporal Non-Stationarity}, 
  year={2026},
  volume={25},
  number={},
  pages={6869-6886},
}

@article{Wang2025DP,
  author  = {Wang, Yafei and Ha, Vu Nguyen and Ntontin, Konstantinos and Yan, Hong and Wang, Wenjin and Chatzinotas, Symeon and Ottersten, Bj{\"o}rn},
  title   = {Statistical {CSI}-Based Distributed Precoding Design for {OFDM}-Cooperative Multi-Satellite Systems},
  journal = {IEEE J. Sel. Areas Commun.},
 year={2026},
  volume={44},
  number={},
  pages={3219-3236},
}

@ARTICLE{XLMIMOCPHou,
  author={Hou, Hongwei and Wang, Yafei and Yi, Xinping and Wang, Wenjin and Slock, Dirk T. M. and Jin, Shi},
  journal={IEEE Trans. Commun.}, 
  title={Tensor-Structured {Bayesian} Channel Prediction for Upper Mid-Band {XL}-{MIMO} Systems}, 
  year={2026},
  volume={74},
  number={},
  pages={3968-3983},
  doi={10.1109/TCOMM.2026.3658933}}

@article{Cite2014Quadriga,
	title={{QuaDRiGa}: A 3-{D} multi-cell channel model with time evolution for enabling virtual field trials},
	author={Jaeckel, Stephan and Raschkowski, Leszek and B{\"o}rner, Kai and Thiele, Lars},
	journal={IEEE Trans. Antennas Propag.},
	volume={62},
	number={6},
	pages={3242--3256},
	month=jun,	
	year={2014},
	publisher={IEEE}
}

@book{3GPP_38901,
		title={Study on channel model for frequencies from 0.5 to 100 {GHz, Version} 16.1.0},
		year={2019},
		month=dec,
		publisher={document 3GPP T.R. 38.901}
	}

@article{CiteZengyongCKMHybridBeamforming,
	author={Wu, Di and Zeng, Yong and Jin, Shi and Zhang, Rui},
	journal={IEEE Trans. Wireless Commun.}, 
	title={Environment-Aware Hybrid Beamforming by Leveraging Channel Knowledge Map}, 
	volume={23},
	number={5},
	pages={4990-5005},
	month={May.},
	year={2024},
	publisher={IEEE}
}

@article{Citezeng2024CKMtutorialSCSI,
  author={Zeng, Yong and Chen, Junting and Xu, Jie and Wu, Di and Xu, Xiaoli and Jin, Shi and Gao, Xiqi and Gesbert, David and Cui, Shuguang and Zhang, Rui},
journal={IEEE Commun. Surveys Tuts.}, 
title={A Tutorial on Environment-Aware Communications via Channel Knowledge Map for {6G}}, 
month=feb,
year={2024},
volume={26},
number={3},
pages={1478-1519},
	publisher={IEEE}
}

@article{Enabling6GThroughMultiDomainChannelExtrapolation,
  title={Enabling {6G} through multi-domain channel extrapolation: Opportunities and challenges of generative artificial intelligence},
  author={Gao, Yuan and Lu, Zichen and Wu, Yifan and Jin, Yanliang and Zhang, Shunqing and Chu, Xiaoli and Xu, Shugong and Wang, Cheng-Xiang},
  journal={IEEE Commun. Mag.},
  year={2026},
  volume={64},
  number={1},
  pages={222-228},
  publisher={IEEE}
}

@article{Lin2021AntennaSelection,
  title={Deep learning-based antenna selection and {CSI} extrapolation in massive {MIMO} systems},
  author={Lin, Bo and Gao, Feifei and Zhang, Shun and Zhou, Ting and Alkhateeb, Ahmed},
  journal={IEEE Trans. Wireless Commun.},
  volume={20},
  number={11},
  pages={7669--7681},
  year={2021},
  publisher={IEEE}
}

@inproceedings{yang2020deep,
  title={Deep learning based antenna selection for channel extrapolation in {FDD} massive {MIMO}},
  author={Yang, Yindi and Zhang, Shun and Gao, FeiFei and Xu, Chao and Ma, Jianpeng and Dobre, Octavia A},
  booktitle={Proc. IEEE Int. Conf. Wireless Commun. Signal Process. (WCSP)},
  pages={182--187},
  year={2020},
  organization={IEEE}
}

@article{He2025JUDCEN,
  title={Deep Learning-Based Joint Uplink-Downlink {CSI} Acquisition for Next-Generation Upper Mid-Band Systems},
  author={He, Xuan and Hou, Hongwei and Wang, Yafei and Wang, Wenjin and Jin, Shi and Chatzinotas, Symeon and Ottersten, Bj{\"o}rn},
  journal={arXiv preprint arXiv:2512.02557},
  year={2025}
}

@article{Gao2026PDP,
  title={Channel Extrapolation for {MIMO} Systems with the Assistance of Multi-path Information Induced from Channel State Information},
  author={Gao, Yuan and Wu, Xinyi and Jun, Jiang and Zhang, Zitian and Yang, Zhaohui and Xu, Shugong and Wang, Cheng-Xiang and Han, Zhu},
  journal={arXiv preprint arXiv:2601.21524},
  year={2026}
}

@ARTICLE{XRLoopbackMechanismBojovic2023,
  author={Bojović, Biljana and Lagén, Sandra and Koutlia, Katerina and Zhang, Xiaodi and Wang, Ping and Yu, Liwen},
  journal={IEEE J. Sel. Areas Commun.}, 
  title={Enhancing {5G} {QoS} Management for {XR} Traffic Through {XR} Loopback Mechanism}, 
  year={2023},
  volume={41},
  number={6},
  pages={1772-1786},
  doi={10.1109/JSAC.2023.3273701}}

@article{BroadbandIoTZhou2023,
  title={Novel listen-before-talk access scheme with adaptive backoff procedure for uplink centric broadband communication},
  author={Zhou, Hui and Deng, Yansha and Nallanathan, Arumugam},
  journal={IEEE Internet Things J.},
  volume={10},
  number={22},
  pages={19981--19992},
  year={2023},
  publisher={IEEE}
}

@article{GaofeifeiDACEN2023,
  title={Pay less but get more: A dual-attention-based channel estimation network for massive {MIMO} systems with low-density pilots},
  author={Zhou, Binggui and Yang, Xi and Ma, Shaodan and Gao, Feifei and Yang, Guanghua},
  journal={IEEE Trans. Wireless Commun.},
  year={2024},
  volume={23},
  number={6},
  pages={6061-6076},
  publisher={IEEE}
}

@article{MMSEOFDMChannelEstimationEdfors1998,
  title={{OFDM} channel estimation by singular value decomposition},
  author={Edfors, Ove and Sandell, Magnus and Van de Beek, J-J and Wilson, Sarah Kate and Borjesson, P Ola},
  journal={IEEE Trans. Commun.},
  volume={46},
  number={7},
  pages={931--939},
  year={1998},
  publisher={IEEE}
}

@article{OFDMLinearInterpolationLee2008,
  title={On the training of {MIMO}-{OFDM} channels with least square channel estimation and linear interpolation},
  author={Lee, Seung Joon},
  journal={IEEE Commun. Lett.},
  volume={12},
  number={2},
  pages={100--102},
  year={2008},
  publisher={IEEE}
}

@article{2DChannelExtrapolationWan2024,
  title={A two-stage {2D} channel extrapolation scheme for {TDD} {5G} {NR} systems},
  author={Wan, Yubo and Liu, An},
  journal={IEEE Trans. Wireless Commun.},
  volume={23},
  number={8},
  pages={8497--8511},
  year={2024},
  publisher={IEEE}
}

@ARTICLE{WuchiARChannelPrediction,
  author={Wu, Chi and Yi, Xinping and Zhu, Yiming and Wang, Wenjin and You, Li and Gao, Xiqi},
  journal={IEEE J. Sel. Areas Commun.}, 
  title={Channel Prediction in High-Mobility Massive {MIMO}: From Spatio-Temporal Autoregression to Deep Learning}, 
  year={2021},
  volume={39},
  number={7},
  pages={1915-1930},
  doi={10.1109/JSAC.2021.3078503}}

@ARTICLE{Yuan2020MachineLearningBasedChannelPredictionAR,
  author={Yuan, Jide and Ngo, Hien Quoc and Matthaiou, Michail},
  journal={IEEE Trans. Wireless Commun.},
  title={Machine Learning-Based Channel Prediction in Massive {MIMO} With Channel Aging}, 
  year={2020},
  volume={19},
  number={5},
  pages={2960-2973},
  doi={10.1109/TWC.2020.2969627}}

@inproceedings{SosChannelPredictionWong2006,
  title={{WLC43-5}: low-complexity adaptive high-resolution channel prediction for {OFDM} systems},
  author={Wong, Ian C and Evans, Brian L},
  booktitle={Proc. IEEE Global Telecommun. Conf. (GLOBECOM)},
  pages={1--5},
  year={2006},
  organization={IEEE}
}

@article{ParameterBasedModel2018prediction,
  title={Prediction of time-varying multi-user {MIMO} channels based on {DOA} estimation using compressed sensing},
  author={Uehashi, Shunsuke and Ogawa, Yasutaka and Nishimura, Toshihiko and Ohgane, Takeo},
  journal={IEEE Trans. Veh. Technol.},
  volume={68},
  number={1},
  pages={565--577},
  year={2019},
  publisher={IEEE}
}

@article{LSTMGRUChannelPredictionHelmy2023,
  title={{LSTM}-{GRU} model-based channel prediction for one-bit massive {MIMO} system},
  author={Helmy, Islam and Tarafder, Pulok and Choi, Wooyeol},
  journal={IEEE Trans. Veh. Technol.},
  volume={72},
  number={8},
  pages={11053--11057},
  year={2023},
  publisher={IEEE}
}

@ARTICLE{TrasnformerChannelPredictionJiang2022,
  author={Jiang, Hao and Cui, Mingyao and Ng, Derrick Wing Kwan and Dai, Linglong},
  journal={IEEE J. Sel. Areas Commun.}, 
  title={Accurate Channel Prediction Based on Transformer: Making Mobility Negligible}, 
  year={2022},
  volume={40},
  number={9},
  pages={2717-2732},
  doi={10.1109/JSAC.2022.3191334}}

@ARTICLE{MultiDomainChannelExtrapolationHan2021,
  author={Han, Yu and Jin, Shi and Li, Xiao and Wen, Chao-Kai and Quek, Tony Q. S.},
  journal={IEEE Trans. Commun.}, 
  title={Multi-Domain Channel Extrapolation for {FDD} Massive {MIMO} Systems}, 
  year={2021},
  volume={69},
  number={12},
  pages={8534-8550},
  doi={10.1109/TCOMM.2021.3112593}}

@article{CiteOMPBaseLine,
		title={Signal recovery from random measurements via orthogonal matching pursuit},
		author={Tropp, Joel A and Gilbert, Anna C},
		journal={IEEE Trans. Inf. Theory},
		volume={53},
		number={12},
		pages={4655--4666},
		year={2007},
				month=dec,
		publisher={IEEE}
	}

@article{CiteVSDTensorDecompositionMethod,
		title={Estimating Channels With Hundreds of Sub-Paths for {MU-MIMO} Uplink: A Structured High-Rank Tensor Approach},
		author={Chen, Panqi and Cheng, Lei},
		journal= {IEEE Signal Process. Lett.},
		year={2024},
		month= sep,
		volume={31},
		pages={2320-2324},
		publisher={IEEE}
	}

@ARTICLE{HBFSurveyAhmed2018,
  author={Ahmed, Irfan and Khammari, Hedi and Shahid, Adnan and Musa, Ahmed and Kim, Kwang Soon and De Poorter, Eli and Moerman, Ingrid},
  journal={IEEE Commun. Surveys Tuts.}, 
  title={A Survey on Hybrid Beamforming Techniques in {5G}: Architecture and System Model Perspectives}, 
  year={2018},
  volume={20},
  number={4},
  pages={3060-3097},
  doi={10.1109/COMST.2018.2843719}}

@ARTICLE{FR3ChannelMeasurementMiao2023,
  author={Miao, Haiyang and Zhang, Jianhua and Tang, Pan and Tian, Lei and Zhao, Xinyu and Guo, Bolun and Liu, Guangyi},
  journal={IEEE J. Sel. Areas Commun.}, 
  title={Sub-6 {GHz} to {mmWave} for {5G}-Advanced and Beyond: Channel Measurements, Characteristics and Impact on System Performance}, 
  year={2023},
  volume={41},
  number={6},
  pages={1945-1960},
  doi={10.1109/JSAC.2023.3274175}}

@ARTICLE{HBFSwithchChannelEstimationMao2018,
  author={Vlachos, Evangelos and Alexandropoulos, George C. and Thompson, John},
  journal={IEEE Signal Process. Lett.}, 
  title={Massive {MIMO} Channel Estimation for Millimeter Wave Systems via Matrix Completion}, 
  year={2018},
  volume={25},
  number={11},
  pages={1675-1679},
  doi={10.1109/LSP.2018.2870533}}

@article{vaswani2017attention,
  title={Attention is all you need},
  author={Vaswani, Ashish and Shazeer, Noam and Parmar, Niki and Uszkoreit, Jakob and Jones, Llion and Gomez, Aidan N and Kaiser, {\L}ukasz and Polosukhin, Illia},
  journal={Advances in neural information processing systems},
  volume={30},
  year={2017}
}

@article{FR1FR2ChannelMeasurementNaqvi2021,
  title={5{G} {NR} {mmWave} indoor coverage with massive antenna system},
  author={Naqvi, Syed Hassan Raza and Ho, Pin Han and Peng, Limei},
  journal={J. Commun. Networks},
  volume={23},
  number={1},
  pages={1--11},
  year={2021},
  publisher={KICS}
}

@article{hou2024jointCIR,
  title={Joint beam alignment and {Doppler} estimation for fast time-varying wideband {mmWave} channels},
  author={Hou, Hongwei and Wang, Yafei and Yi, Xinping and Wang, Wenjin and Jin, Shi},
  journal={IEEE Trans. Wireless Commun.},
  volume={23},
  number={9},
  pages={5477--5492},
  year={2024},
  publisher={IEEE}
}

@ARTICLE{SpatialDomainChannelExtrapolationZhang2022,
  author={Zhang, Shunbo and Zhang, Shun and Ma, Jianpeng and Liu, Tian and Dobre, Octavia A.},
  journal={IEEE Trans. Veh. Technol.}, 
  title={Deep Learning Based Antenna-Time Domain Channel Extrapolation for Hybrid {mmWave} Massive {MIMO}}, 
  year={2022},
  volume={71},
  number={12},
  pages={13398-13402},
  doi={10.1109/TVT.2022.3197452}}

@ARTICLE{OFDM1,
  author={Zhu, Huiling and Wang, Jiangzhou},
  journal={IEEE Trans. Commun.}, 
  title={Chunk-based resource allocation in {OFDMA} systems - part {I}: chunk allocation}, 
  year={2009},
  volume={57},
  number={9},
  pages={2734-2744},
  }

@ARTICLE{OFDM2,
  author={Zhu, Huiling and Wang, Jiangzhou},
  journal={IEEE Trans. Commun.}, 
  title={Chunk-Based Resource Allocation in {OFDMA} Systems—Part {II}: Joint Chunk, Power and Bit Allocation}, 
  year={2012},
  volume={60},
  number={2},
  pages={499-509},
}

@ARTICLE{MiaoFR3Measurement,
  author={Miao, Haiyang and Zhang, Jianhua and Tang, Pan and Zhen, Qi and Meng, Jie and Liu, Ximan and Liu, Enrui and Liu, Peijie and Tian, Lei and Liu, Guangyi},
  journal={ IEEE Open J. Commun. Soc.}, 
  title={{6G} New Mid-Band/{FR3} (6–24 {GHz}): Channel Measurement, Characteristics and Modeling}, 
  year={2025},
  volume={6},
  number={},
  pages={9942-9960},
 }

@article{LiuFR3Measurement,
  title={Measurement-Based Analysis of Outdoor Massive {MIMO} Channel Characteristics over {FR3} Frequency Band},
  author={Liu, Enrui and Tang, Pan and Miao, Haiyang and Zhen, Qi and Zhang, Jianhua and Wang, Sen},
  journal={arXiv preprint arXiv:2606.11622},
  year={2026}
}

\end{document}